	\newtheorem{lemma}{Lemma}%
	\newtheorem{theorem}{Theorem}%
	\newtheorem{corollary}{Corollary}%
	\newtheorem{example}{Example}
\definecolor{light-gray}{gray}{0.95}
	\newcommand{\pref}{\succ \xspace}
	\newlength{\wordlength}
	\newcommand{\midd}{\mathbin{:}}
\begin{document}

	\title{Efficient Reallocation \\ under Additive and Responsive Preferences}

	%
	%

		\author{Haris Aziz}\ead{haris.aziz@unsw.edu.au}
		\address{UNSW Sydney and Data61, Sydney 2052, Australia}
			\author{P\'{e}ter Bir\'{o}} \ead{biro.peter@krtk.mta.hu}
			\address{Hungarian Academy of Sciences, Budapest, Hungary}
			
					\author{J{\'{e}}r{\^{o}}me Lang} \ead{lang@lamsade.dauphine.fr}
					\author{Julien Lesca} \ead{julien.lesca@dauphine.fr}
	\author{J{\'{e}}r{\^{o}}me Monnot} \ead{jerome.monnot@dauphine.fr}
	\address{LAMSADE, Universite ́ Paris-Dauphine, Paris, France}



\begin{abstract}
Reallocating resources to get mutually beneficial outcomes is a fundamental problem in various multi-agent settings.
While finding an arbitrary Pareto optimal allocation is generally easy, checking whether a particular allocation is Pareto optimal can be much more difficult. This problem is equivalent to checking that the allocated objects cannot be reallocated in such a way that at least one agent prefers her new share to his old one, and no agent prefers her old share to her new one. We consider the problem for two related types of preference relations over sets of objects. In the first part of the paper we focus on the setting in which agents express additive cardinal utilities over objects. We present computational hardness results as well as polynomial-time algorithms for testing Pareto optimality under different restrictions such as two utility values or lexicographic utilities.
In the second part of the paper we assume that agents express only their (ordinal) preferences over single objects, and that their preferences are additively separable.
In this setting, we present characterizations and polynomial-time algorithms for possible and necessary Pareto optimality.
\end{abstract}

	\begin{keyword}
		Fair Division
		\sep Resource Allocation
		\sep Pareto optimality
	\end{keyword}

\maketitle

\section{Introduction}

Reallocation of resources to achieve  mutually better outcomes is a central concern in multi-agent settings. A desirable way to achieve `better' outcomes is to obtain a Pareto improvement in which each agent is at least as happy and at least one agent is strictly happier.
Pareto improvements are desirable for two fundamental reasons: they result in strictly more welfare for almost any reasonable notion of welfare (such as utilitarian social welfare, or egalitarian `leximin' social welfare). Secondly, they satisfy the minimal requirement of individual rationality in the sense that no agent is worse off after the trade.
If a series of Pareto improvements results in a Pareto optimal outcome, that is even better because there exists  no other outcome which each agent weakly prefers and at least one agent strictly prefers.

We consider the setting in which agents are initially endowed with objects and 
have additive preferences
for the objects. In the absence of endowments, achieving a Pareto optimal assignment is easy in almost all cases: if we know the cardinal utilities that agent have for the objects, we can simply assign every object to the agent who values it the most; if we have only ordinal information, and  assume that an agent always strictly prefer a superset of objects to a subset, then assigning all objects to the same agent leads to a Pareto optimal allocation.\footnote{The only remaining case is where we have ordinal preferences and some agent may be indifferent between receiving an object and not receiving it. In this case, we do not have enough information to say whether an allocation is Pareto optimal or not, nor to find an arbitrary Pareto optimal allocation, independently on computational resources.} 
On the other hand, in the presence of endowments, finding a Pareto optimal assignment that respects individual rationality is more challenging. The problem is closely related to the problem of testing Pareto optimality of the initial assignment: a certificate of Pareto dominance gives an assignment that respects individual rationality and is a Pareto improvement. In fact, if testing Pareto optimality is computationally intractable, then finding an individually rational and Pareto optimal assignment is computationally intractable as well. In view of this, we focus on the problem of testing Pareto optimality. In all cases where we are able to test it
efficiently, we also present algorithms to compute individually rational and Pareto optimal assignments.

We will consider two related settings: one where agents express cardinal preferences, and one where they express ordinal preferences. In the cardinal case, we will assume that not only they express cardinal preferences but {\em additive} preferences, that is, they express a value for every single object, and the value they give to a set of objects is taken to be the sum of the values they gives to each of its elements. We make here a (weak) monotonicity assumption: every value is non-negative. In the ordinal case, we also assume that the underlying preferences of the agents are separable, which leads to assuming that agents have {\em responsive} preferences: an agent prefers to exchange an object $o$ against an object $o'$ in a bundle of objects $S$ (containing $o$) if she prefers to exchange $o$ against $o'$ in any other bundle of objects containing $o$. Because, as we said above, we do now have enough information to test Pareto optimality if some agents may be indifferent between having an object or not, in this setting we will have assume strict monotonicity. 

\paragraph{Contributions}
In the cardinal setting, we first relate the problem of computing an individually rational and Pareto optimal assignment to the more basic problem of testing Pareto optimality of a given assignment.
We show that for an unbounded number of agents, testing Pareto optimality is strongly coNP-complete even if the assignment assigns at most two objects per agent. We then identify some natural tractable cases. In particular,
we present a pseudo-polynomial-time algorithm for the problem when the number of agents is constant.
We characterize Pareto optimality under lexicographic utilities 
and we show that Pareto optimality can be tested in linear time.
For dichotomous utilities in which utilities can take values $\alpha$ or $\beta$, we present a characterization of Pareto optimal assignments which also yields a polynomial-time algorithm to test Pareto optimality.

In the ordinal setting, we consider two versions of Pareto optimality: \emph{possible Pareto optimality} and \emph{necessary Pareto optimality}. For both properties, we present characterizations that lead to polynomial-time algorithms for testing the property for a given assignment.

\paragraph{Related Work}

The setting in which agents express additive cardinal utilities and a welfare maximizing or fair assignment is computed is a very well-studied problem in computer science.
Although computing a utilitarian welfare maximizing assignment is easy, the problem of maximizing egalitarian welfare is NP-hard~\citep{DeHi88a}.


Algorithmic aspects of Pareto optimality have received attention in discrete allocation of indivisible goods~\citep{ACMM05a}, randomized allocation of indivisible goods~\citep{AMXY15a,BoMo01a}, two-sided matching~\citep{CEF+15a,ErEr15a,Manl13a}, coalition formation under ordinal preferences~\citep{ABH11c}, committee elections~\citep{ALL16a}, and probabilistic voting~\citep{ABB14b}.
Since we are interested in Pareto improvements, our paper is also related to housing markets with endowments and ordinal preferences~\citep{AzKe12a,FLS+15a,KQW01a,STSY14a,TSY14a}.
Recently, \citet{DBCM15a} examined restricted Pareto optimality under ordinal preferences. When each agent can get at most one object, there is a well-known characterization of Pareto optimal assignments as not admitting a trading cycle. The result implies that Pareto optimality can be tested in linear time~(see e.g., \citep{ACMM05a,AHR17a}).


De Keijzer et al. \citep{Keijzer09} studied the complexity of deciding whether there exists a Pareto optimal and envy-free assignment when agents have additive utilities. They also showed that testing Pareto optimality under additive utilities is coNP-complete. We show that this result holds even if each agent initially owns two objects.

\citet{CEF+15a} have proved that Pareto optimality of an assignment under lexicographic utilities can be tested in polynomial time. In this paper, we present a simple characterization of Pareto optimality under lexicographic utilities that leads to a linear-time algorithm to test Pareto optimality.

\citet{BEL10a} have considered necessary Pareto optimality as Pareto optimality for all completions of the responsive set extension,\footnote{\citet{BramsEF03} used the term {\em Pareto ensuring} for Pareto optimality for all completions of the responsive set extension.} and have presented some computational results when necessary Pareto optimality is considered \emph{in conjunction}  with other fairness properties. 

Reallocating resources to improve fairness has also been studied before~\citep{Endr13b}.

 \section{Preliminaries}\label{preliminaries}

	
We consider a set of agents $N=\{1,\ldots, n\}$ and a set of objects $O=\{o_1,\ldots, o_m\}$. An assignment $p=(p(1), \ldots , p(n))$ is a partition of $O$ into $n$ subsets, where $p(i)$ is the bundle assigned to agent $i$. We denote by $\mathcal{X}$ the set of all possible assignments. Informally and in examples, we will present an allocation $p$ in the form $(p(1)|\cdots | p(n))$.

In Section \ref{cardinal}, we assume that each agent expresses a cardinal utility function $u_i$ over $O$.
We assume that each object is non-negatively valued, i.e, $u_i(o) \geq 0$ for all $i\in N$ and $o\in O$. 
We also assume additivity: $u_i(O')=\sum_{o\in O'}u_i(o)$ for each $i\in N$ and $O'\subseteq O$. We denote by $\pref_i$ and $\sim_i$ the preference and indifference relations induced by $u_i$, that is, for $S, S' \subseteq O$, $S \pref_i S'$ if $u_i(S) > u_i(S')$, and $S \sim_i S'$ if $u_i(S) = u_i(S')$. If $u_i(o) \geq 0$ for all $o\in O$, we say that $u$ is positively valued. 

In Section \ref{ordinal}, we assume that each agent $i$ expresses only her complete, transitive and reflexive preferences $\pref_i$ over $O$. Agents may be indifferent among objects. Let $\sim_i$ and $\succ_i$ denote the symmetric and anti-symmetric part of $\succsim_i$, respectively. We denote by $E_i^1, \ldots , E_i^{m_i}$ the $m_i$ equivalence classes of an agent $i\in N$. Those classes partition $O$ into $m_i$ sets of objects such that agent $i$ is indifferent between two objects belonging to the same class, and strictly prefers an object of $E_i^k$ to an object of $E_i^l$ whenever $k<l$. The preference profile $\pref=(\pref_1,\ldots, \pref_n)$ specifies for each agent $i$ her preference relation over single objects. 
We will denote by $\mathcal{U}(\pref)$ the set of all utility profiles $u=(u_1,\ldots, u_n)$ such that $u_i\in \mathcal{U}(\pref_i)$ for each $i\in N$.  The way an agent's preference relation is lifted from single objects to sets of objects will be made precise in Section \ref{ordinal}.


{An assignment $p \in \mathcal{X}$ is said to be \emph{individually
  rational} for an initial endowment $e\in \mathcal{X}$ if $u_i(p(i)) \geq u_i(e(i))$ holds for any agent $i$.}
{An assignment $p \in \mathcal{X}$ is said to be \emph{Pareto dominated} by another $q \in \mathcal{X}$ if
 (i) for any agent $i \in N$, $u_i(q(i)) \geq u_i(p(i))$ holds,
 (ii) for at least one agent $i \in N$, $u_i(q(i)) > u_i(p(i))$ holds}. An assignment is \emph{Pareto optimal} if it is not Pareto dominated by another assignment.
Finally, whenever cardinal utilities are considered, the {\em (utilitarian) social welfare} of an assignment $p$ is defined as $SW(p) = \sum_{i\in N}u_i(p(i))$.



\begin{example}\label{example-basic}
Let $n = 3$, $m = 5$, and the utilities of agents be represented as follows.
$$
\begin{array}{c|ccccc}
& o_1 & o_2 & o_3 & o_4 & o_5\\ \hline
1& 16 & 8 & 4 & 2 & 1\\
2 & 10 & 4 & 3 & 3 & 4\\
3 & 6 & 1 & 3 & 6 & 3\\
\end{array}
$$
We assume that the initial assignment is $p=(o_2o_4|o_1|o_3o_5)$ in which $p(1)=\{o_2,o_4\}$, $p(2)=\{o_1\}$, and $p(3)=\{o_3,o_5\}$. We have  $u_1(p(1)) = 12$, $u_2(p(2)) = 10$ and $u_3(p(3)) = 6$. We see that $p$ is not Pareto optimal: indeed, $p' = (o_1|o_2o_3o_5|o_4)$ gives respective utilities 16, 11 and 6 to agents 1, 2 and 3. On the other hand, $p'$ cannot be improved and is Pareto optimal. We can see this by noticing first that agent 1 cannot do better without object $o_1$. Let us give $o_1$ to 1; now, the only way for 2 to do better (without $o_1$) is to have all four remaining objects, in which case 3 gets nothing.  
\end{example}

\section{Additive utilities}\label{cardinal}

In this section we assume that each agent expresses a cardinal
utility function $u_i$ over $O$, where $u_i(o)\geq 0$ for all $i\in N$ and $o\in O$.

\subsection{Testing Pareto optimality: hard cases}

 We will consider Pareto optimality and individual rationality with respect to additive utilities.
 The following lemma shows that the computation of an individually rational and Pareto improving assignment is at least as hard as the problem of deciding whether a given assignment is Pareto optimal:

\begin{lemma}\label{lemma:compute-to-test}
	If there exists a polynomial-time algorithm to compute a Pareto optimal and individually rational assignment, then there exists a polynomial-time algorithm to test Pareto optimality.
	\end{lemma}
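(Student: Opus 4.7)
The plan is a direct reduction: given an instance together with an assignment $p$ whose Pareto optimality we wish to test, I would treat $p$ as the initial endowment $e$ and invoke the hypothetical polynomial-time algorithm to compute some Pareto optimal and individually rational assignment $q$ with respect to $e = p$. I would then simply compare the utility vectors of $p$ and $q$, declaring $p$ Pareto optimal iff $u_i(q(i)) = u_i(p(i))$ for every agent $i \in N$.

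For correctness I would argue the two directions separately. If $p$ is Pareto optimal, then any $q$ with $u_i(q(i)) \geq u_i(p(i))$ for all $i$ (guaranteed by individual rationality) and some strict inequality would Pareto dominate $p$, a contradiction; hence the equalities must hold. Conversely, if $p$ is not Pareto optimal and yet $u_i(q(i)) = u_i(p(i))$ for every $i$, then any assignment $r$ that Pareto dominates $p$ would also Pareto dominate $q$ (since $u_i(r(i)) \geq u_i(p(i)) = u_i(q(i))$ with some strict inequality), contradicting the Pareto optimality of $q$; thus some strict inequality must appear in the comparison between $q$ and $p$, and the test correctly outputs that $p$ is not Pareto optimal.

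The running time is clearly polynomial: one call to the assumed algorithm plus $O(n)$ utility comparisons (each utility being computable in polynomial time by additivity). The only subtlety worth flagging is the observation, used in both directions, that Pareto optimality depends only on the utility vector induced by an assignment and is preserved under swapping an assignment for another that gives every agent exactly the same utility --- a small but essential step. No real obstacle arises; the content of the lemma is essentially the remark that the algorithmic hypothesis is strong enough to discriminate between ``$p$ is already optimal'' and ``$p$ can be strictly improved.''
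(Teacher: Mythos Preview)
Your proposal is correct and follows essentially the same approach as the paper: use $p$ as the endowment, run the assumed algorithm to obtain a Pareto optimal and individually rational $q$, and declare $p$ Pareto optimal iff the utility vectors of $p$ and $q$ coincide. Your justification of the ``equal utility vectors $\Rightarrow$ $p$ is Pareto optimal'' direction is in fact slightly more explicit than the paper's, but the argument is the same.
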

	 \begin{proof}
		  We assume that there is a polynomial-time algorithm $A$ to compute an individually rational and Pareto optimal assignment. Consider an assignment $p$ for which Pareto optimality needs to be tested. We can use $A$ to compute an assignment $q$ which is individually rational for the initial endowment $p$ and Pareto optimal. By individual rationality $u_i(q(i))\geq u_i(p(i))$ for all $i\in N$. If $u_i(q(i))= u_i(p(i))$ for all $i\in N$, then $p$ is Pareto optimal simply because $q$ is Pareto optimal. However, if there exists $i\in N$ such that $u_i(q(i))> u_i(p(i))$, it means that $p$ is \emph{not} Pareto optimal.
		\end{proof}

For Example~\ref{example-basic}, assume first that the initial assignment is $p$. Then computing a Pareto optimal and individually rational assignment will return $p'$ (note that for this example there is no other individually rational and Pareto optimal assignment; of course, in general there may be several). Since $p'$ Pareto dominates $p$, we conclude that $p$ is not Pareto optimal. Now, assume that the initial assignment is $p'$. Then computing a Pareto optimal and individually rational assignment will return $p'$, therefore, $p'$ is Pareto optimal. 

A Pareto optimal assignment can be computed trivially by giving each object to the agent who values it the most.\footnote{Note that this particular assignment also maximizes social welfare.} \citet{BoLa14b} proved that a problem concerning coalitional manipulation in sequential allocation is NP-complete (Proposition 6). The result can be reinterpreted as follows.
%

\begin{theorem}
	\label{th:lang}
	Testing Pareto optimality of a given assignment is weakly coNP-complete, even for $n=2$ and identical ordinal preferences.
		\end{theorem}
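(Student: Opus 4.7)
Membership in coNP would be immediate: a Pareto-dominating assignment $q$ of the given $p$ is a polynomial-size witness, verified by comparing $u_i(p(i))$ with $u_i(q(i))$ for $i\in\{1,2\}$.

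For coNP-hardness my plan is to reinterpret the NP-completeness of coalitional manipulation in sequential allocation established by \citet{BoLa14b}. In that problem, two agents with identical ordinal preferences are given a picking sequence and asked whether they can deviate jointly from the truthful top-remaining picks so that at least one of them becomes strictly better off while neither is worse off. The truthful outcome of the picking sequence is a specific assignment $p^{\ast}$, and the reduction would simply hand this $p^{\ast}$, together with the given utilities, to the Pareto-optimality test: a successful coalitional manipulation is then exactly a Pareto improvement of $p^{\ast}$, so the manipulation instance is a yes-instance iff $p^{\ast}$ fails Pareto optimality.

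The step I expect to be the main obstacle is checking that the translation is faithful in both directions. One direction is routine: every coalitional deviation yields a valid partition of $O$, hence a candidate Pareto-dominating assignment. The less obvious direction is that every Pareto-improving $q$ of $p^{\ast}$ is realizable by some joint pick-sequence deviation of the coalition; this relies crucially on $n=2$, because when the whole agent set manipulates there is no third-party truthful pick that could block access to any desired partition of $O$. Once this is verified, the reduction runs in polynomial time, and because the hardness in \citet{BoLa14b} uses utilities encoded in binary and polynomial in size, it transfers to a weak coNP-hardness for the Pareto-optimality test, yielding the claimed completeness.

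Should reinterpretation prove awkward, an alternative route is a direct reduction from \textsc{Partition}: from integers $a_1,\ldots,a_k$ with $\sum_i a_i=2S$, introduce $k$ ``integer'' objects together with a few guard objects, all ranked identically by both agents, and choose two cardinal utility functions consistent with this common ordering so that the only way to produce a Pareto improvement over a designated initial partition is to transfer a subset of the $o_i$'s of total weight exactly $S$. The delicate part of that construction would be keeping both utility functions consistent with one shared ordinal ranking while preserving enough cardinal freedom to encode the sum gadget, which can be arranged by small perturbations of a common base scheme.
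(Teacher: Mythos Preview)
Your approach is exactly the one the paper takes: the paper gives no proof of its own beyond the sentence ``\citet{BoLa14b} proved that a problem concerning coalitional manipulation in sequential allocation is NP-complete (Proposition 6). The result can be reinterpreted as follows.'' So your plan to unpack that reinterpretation is on target, and your coNP-membership argument and the easy direction of the equivalence are fine.

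There is one point that needs tightening. You write that with $n=2$ and the whole agent set manipulating, the coalition can reach ``any desired partition of $O$.'' That is not quite true: a fixed picking sequence determines how many objects each agent ends up with, so the coalition can realize exactly the partitions with those cardinalities, not arbitrary ones. Consequently the equivalence ``coalitional manipulation succeeds $\Leftrightarrow$ $p^{\ast}$ is not Pareto optimal'' is not automatic: a Pareto improvement $q$ with $|q(1)|\neq |p^{\ast}(1)|$ would witness non-Pareto-optimality without witnessing a feasible manipulation. To close this you must either (i) argue, for the specific hard instances in the \citet{BoLa14b} reduction, that whenever \emph{some} Pareto improvement exists a cardinality-preserving one exists as well (this is where inspecting their \textsc{Partition} gadget is unavoidable), or (ii) bypass the issue entirely via your alternative direct reduction from \textsc{Partition}, which is probably the cleaner route for a self-contained write-up. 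The paper's remark that the result ``can be extended'' to the balanced case suggests (i) does go through, but you should not leave it as an unexamined claim.
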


				\begin{proof}
		We write the proof for the sake of completeness. Testing Pareto optimality is in coNP since one can test that an assignment is Pareto dominated by another one in polynomial time.
		The reduction is from PARTITION. An instance of PARTITION is described by a set of $t$ elements $E=\{e_1,\ldots,e_t\}$, and integer weight $w(e_j)$ for each $e_j\in E$ such that $\sum_{e_i\in E} w(e_i)=2M$. The question is to decide if there exists a balanced partition of $E$ i.e., $S\subseteq E$ such that $\sum_{e_i\in S}w(e_i)=\sum_{e_i\in E\setminus S}w(e_i)=M$?
		The reduction relies on a set of $t+1$ objects $\{g^+,g_1,\ldots, g_t\}$, and two agents $\{ 1, 2\}$.
		The utility values for agent 1 are $u_1(g^+)=M$ and  $u_1(g_i)=w(e_i)$ for all $i\in \{1, \ldots t\}$.
		The utility values for agent 2 are $u_2(g^+)=M+\varepsilon$, with $0<\varepsilon<1$, and $u_2(g_i)=w(e_i)$ for all $i\in \{ 1, \ldots ,t\}$.
		Then, one can easily check that the assignment in which agent $1$ gets $g^+$ and agent $2$ gets all $g_j$ objects is not Pareto optimal if and only if there is a balanced partition of $E$.
					\end{proof}

	\begin{corollary}\label{coro:weaklynphard}
		Computing an individually rational and Pareto optimal assignment is weakly NP-hard for $n=2$.
		\end{corollary}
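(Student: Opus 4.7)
The plan is a direct contrapositive of Lemma~\ref{lemma:compute-to-test} combined with Theorem~\ref{th:lang}, with no new construction needed. I would start by recalling the Turing reduction already embedded in the lemma: given an instance $p$ whose Pareto optimality we wish to decide, run the hypothetical polynomial-time algorithm for the search problem on $p$ itself as the initial endowment to produce an individually rational and Pareto optimal $q$; then $p$ is Pareto optimal if and only if $u_i(q(i)) = u_i(p(i))$ for every agent $i$, a check that costs linear time. So a polynomial-time solver for the search problem would yield a polynomial-time algorithm for testing Pareto optimality, even under the restriction $n=2$ with identical preferences, since that restriction is preserved by the reduction.

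To conclude weak NP-hardness I would invoke Theorem~\ref{th:lang}: testing Pareto optimality in exactly this restricted setting is weakly coNP-complete, so in particular the complement language ``$p$ is not Pareto optimal'' is weakly NP-complete. The reduction of the previous paragraph equally decides this complement language, so a polynomial-time algorithm for computing an individually rational and Pareto optimal assignment would place a weakly NP-hard problem in P. The ``weakly'' qualifier transfers automatically because the reduction introduces no blow-up in the numerical magnitudes of the utilities. There is essentially no obstacle in the plan beyond articulating this conversion from weak coNP-hardness of the decision problem to weak NP-hardness of the search problem; it is a standard observation that a search algorithm which enables deciding a coNP-complete language equally enables deciding its NP-complete complement.
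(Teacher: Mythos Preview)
Your proposal is correct and matches the paper's intended argument: the corollary is stated without proof precisely because it follows immediately from Lemma~\ref{lemma:compute-to-test} (the compute-to-test reduction) together with Theorem~\ref{th:lang}, exactly as you spell out. Your additional remark about the ``weakly'' qualifier transferring and about coNP-hardness of testing being equivalent to NP-hardness of the complement is a useful clarification, but the paper treats all of this as routine and omits it.
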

		%
			

\medskip 

For an unbounded number of agents, testing Pareto optimality of a given assignment is strongly coNP-complete~\citep{Keijzer09}.
We show now that the problem remains strongly coNP-complete, for an unbounded number of agents, even if each agent initially owns exactly two objects.

\begin{theorem}\label{th:testpo-conpc}
	Testing Pareto optimality of a given assignment is strongly coNP-complete, even if each agent receives initially owns two objects.				 	 \end{theorem}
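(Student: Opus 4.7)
The plan is to establish coNP-membership and strong coNP-hardness separately. Membership is immediate: a no-certificate is any assignment $q$ that Pareto-dominates the input $p$, and checking domination costs a single pass over the agents by additive summation. For strong coNP-hardness under the extra restriction that each agent initially owns exactly two objects, I would reduce from a strongly NP-complete tripartite covering problem, namely 3-Dimensional Matching (3DM): given disjoint sets $X, Y, Z$ of size $q$ and a family $T \subseteq X \times Y \times Z$, decide whether $T$ admits a perfect matching. Its ``one triple covers three elements simultaneously'' structure is a natural fit for a gadget built out of cycles of pairwise swaps, which is essentially all that is available once every agent holds only two items.

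Given a 3DM instance, I would build the following. For each element $e \in X \cup Y \cup Z$ introduce an \emph{element object} $o_e$ together with a \emph{dummy} $d_e$ held by an \emph{element agent} $a_e$ whose initial bundle is $\{o_e, d_e\}$. For each triple $t = (x,y,z) \in T$ introduce three \emph{token objects} $r^x_t, r^y_t, r^z_t$ and three \emph{triple agents} $a^1_t, a^2_t, a^3_t$, each holding one token plus a private dummy. Utilities are calibrated so that (i) $a_e$ strictly prefers trading $o_e$ for a token $r^\bullet_t$ of any triple $t$ that contains $e$; (ii) each triple agent strictly prefers receiving the specific element object that her token encodes to keeping that token; and (iii) the overall gain for triple $t$ materialises only when all three transfers $a_x \leftrightarrow a^1_t$, $a_y \leftrightarrow a^2_t$, $a_z \leftrightarrow a^3_t$ happen simultaneously. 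A perfect 3DM then yields an explicit Pareto improvement as a disjoint union of such six-agent cycles, and conversely any dominating assignment, once reduced to its minimal trading cycles, must select $q$ pairwise disjoint triples covering $X \cup Y \cup Z$.

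The main obstacle is the utility calibration, because an element $e$ may occur in many triples of $T$, so the element agent $a_e$ can in principle exchange $o_e$ with tokens belonging to different triples, opening the door to parasitic cross-triple trading cycles that do not correspond to valid matchings. Preventing these while keeping the intended per-triple cycle profitable is the technical crux: I would introduce tiered utility values with gaps indexed by the incidence multiplicity of each element in $T$, so that only a trade in which each element object is routed to a \emph{single} consistent triple leaves every participant weakly better off. Because the required values can be kept polynomial in $|X| + |Y| + |Z| + |T|$, the reduction witnesses strong coNP-hardness, matching the coNP upper bound and strengthening the result of \citet{Keijzer09} to the two-objects-per-agent regime.
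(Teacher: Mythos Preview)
Your coNP membership argument is fine. The hardness reduction, however, has a structural gap that is more basic than the cross-triple interference you flag: under conditions (i) and (ii) as you state them, the initial assignment is \emph{never} Pareto optimal whenever $T\neq\emptyset$, irrespective of whether a perfect matching exists. Take any triple $t=(x,y,z)\in T$. By (i), the element agent $a_x$ strictly prefers the token $r^x_t$ to her own object $o_x$; by (ii), the triple agent $a^1_t$ strictly prefers $o_x$ to her token $r^x_t$. Swapping just these two objects between $a_x$ and $a^1_t$, with every other agent keeping her bundle, is already a Pareto improvement. Your requirement (iii) that ``the overall gain for triple $t$ materialises only when all three transfers happen simultaneously'' is therefore incompatible with (i) and (ii) under additive utilities: once both sides of a bilateral swap strictly gain, nothing in additivity can force the other two transfers to occur. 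Consequently the reduction maps every nontrivial 3DM instance to a non-Pareto-optimal assignment, and no information about the existence of a perfect matching is encoded.

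To make a 3DM-style reduction work with two objects per agent you would need each per-triple exchange to be a genuinely indecomposable cycle in which some participant is strictly \emph{worse off} in every proper sub-exchange and is compensated only when the full cycle closes; three parallel mutually beneficial swaps cannot do this. Designing such a gadget so that unused triples stay put, used triples close, and every Pareto improvement forces a perfect cover is precisely the hard part, and your ``tiered utility values with gaps'' sketch does not address it. The paper avoids this difficulty by reducing from \textsc{2-Numerical Matching with Target Sums} rather than 3DM: it threads a single long cycle through auxiliary agents $L\cup R\cup\{d\}$ via ``glue'' objects $G$, so that any Pareto improvement is forced either to coincide with the initial assignment or to push every glue object one step around the cycle; in the latter case the central agents $c_i$ must each receive a pair $(f^L_{\theta(i)},f^R_{\pi(i)})$ whose utilities sum exactly to $3k+a_i$, directly yielding the permutations $\pi,\theta$.
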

	\begin{proof}
	The reduction is done from {\sc 2-numerical matching with target
	sums} ({\sc 2NMTS} in short). The input of {\sc 2NMTS} is a
	sequence $a_1,\dots, a_k$ of $k$ positive integers such that
	$\sum_{i=1}^k a_i = k(k + 1)$, where $1\leq a_i \leq 2k-1$ for
	$i=1,\dots,k$, and $a_1\leq a_2\ldots \leq a_k$. We want to decide if there are two permutations $\pi$
	and $\theta$ of the integers $\{1,\dots,k\}$ such that $\pi(i)+\theta(i)=a_i$
	for $i=1,\dots,k$. 
	{\sc 2NMTS}  is known to be strongly
NP-complete \cite{YuHL04}. 

The reduction from an instance of 2NMTS is as follows. There are $3k+1$ agents $N=L\cup C\cup R\cup \{d\}$ where $L=\{\ell_1,\dots,\ell_k\}$, $R=\{r_1,\dots,r_k\}$ and $C=\{c_1,\dots,c_k\}$, and $6k+2$ objects $O=F\cup G\cup H\cup\{o\}$ where $F=\{f^L_i,f^R_i:i=1,\dots,k\}$, $G=\{g^L_i,g^R_i:i=1,\dots,k\}\cup \{g^C\}$, and $H=\{h^{CL}_i,h^{CR}_i:i=1,\dots,k\}$. Let $\varepsilon$ be a positive value strictly lower than $1/2$. All utility values are equal to $\varepsilon$ except for the ones summarized in the following table, 
where agt\#1 is the agent which receives the object in the initial assignment and $u_{agt\#1}$ is her utility for it, and where agt(s)\#2 lists the other agents having a utility different from $\varepsilon$ for the object and $u_{agt(s)\#2}$ corresponds to their utility for it:


	
	$$
	\begin{array}{c||c|c||c|c|c}
	\hbox{object}&\hbox{agt}\#1&u_{\hbox{agt}\#1}&\hbox{agt(s)}\#2&u_{\hbox{agt(s)}\#2}\\
	\hline
           h_i^{CL}&c_i&a_i&\ell_i&1+\varepsilon\rule[-3pt]{0pt}{13pt}\\
	\hline
	h_i^{CR}&c_i&3k&r_i&1-\varepsilon\rule[-3pt]{0pt}{13pt}\\
	\hline
	f_i^L&\ell_i&1&c_j ~\hbox{with}~ a_j\geq i+1&i\rule[-3pt]{0pt}{13pt}\\
	\hline
	f_i^R&r_i&1&c_j ~\hbox{with}~ a_j\geq i+1&3k+i\rule[-3pt]{0pt}{13pt}\\
	\hline
	g_i^R&r_i&3&r_{i+1} ~\hbox{if}~ i<k&3+\varepsilon\rule[-3pt]{0pt}{13pt}\\
	&&&d ~\hbox{if}~ i=k&3+\varepsilon\\
	\hline
	g_i^L&\ell_i&3&\ell_{i-1} ~\hbox{if}~ i>1&3-\varepsilon\rule[-3pt]{0pt}{13pt}\\
	&&&r_1 ~\hbox{if}~ i=1&3+\varepsilon\\
	\hline
	g^C&d&3&\ell_k&3-\varepsilon\rule[-3pt]{0pt}{13pt}\\
	\hline
	o&d&1&&\\
           \end{array}
	$$
The initial assignment gives the following utilities to agents: $u_d(\{ g^C, o\})=4$, $u_{c_i}(\{ h_i^{CL}, h_i^{CR}\})=3k+a_i$, $u_{\ell_i}(\{f_i^L, g_i^L \})=4$ and $u_{r_i}(\{f_i^R, g_i^R \})=4$ for $i=1\ldots k$.

\smallskip

	Clearly, this instance is constructed within polynomial time and each agent has initially two objects. We claim that there is a Pareto improvement of the initial assignment if and only if  $\{a_i: i=1\ldots k\}$ is a yes-instance of {\sc 2NMTS}. 

\smallskip

	Assume first that there exist $\pi$ and $\theta$ such that $\pi(i)+\theta(i)=a_i$ for $i=1\ldots k$ i.e., $\{a_i: i=1\ldots k\}$ is a yes-instance of {\sc 2NMTS}. Note that this implies for any $i=1\ldots k$ that
\begin{equation}\label{equationproof}
\pi(i)+1\leq a_i~~\hbox{and}~~\theta(i)+1\leq a_i
\end{equation}
because $\pi(i)\geq 1$ and $\theta(i)\geq 1$. Then consider the following assignment:
\begin{itemize}
\item $\{h_i^{CL}, g_{i+1}^L\}$ (resp. $\{h_k^{CL}, g^C\}$) is assigned to $\ell_i$ with $i<k$ (resp. to $\ell_k$) with utility 4.
\item $\{h_i^{CR}, g_{i-1}^R\}$ (resp. $\{h_1^{CR}, g_1^L\}$) is assigned to $r_i$ with $i>1$ (resp. to $r_1$) with utility 4.
\item $\{f_{\pi(i)}^R, f_{\theta(i)}^L\}$ is assigned to $c_i$. By using (\ref{equationproof}), it is easy to check that the utility of agent $c_i$ is $3k+\pi(i)+\theta(i)=3k+a_i$.
\item $\{o, g_k^R\}$ is assigned to $d$ with utility $4+\varepsilon$.
\end{itemize}
This allocation is clearly a Pareto improvement of the initial allocation.

\smallskip

Assume now that $\{a_i: i=1\ldots k\}$ is a no-instance of {\sc 2NMTS}. By contradiction, assume that there exists a Pareto improvement $p$ of the initial assignment. Note first that any agent should receive in $p$ at least two objects. Indeed, there is no object which provides a utility greater than $3+\varepsilon$ to any agent of $L\cup R\cup \{ d\}$, and any of those agents receives a utility of 4 in the initial assignment. Furthermore, any object $f_i^R$ provides a utility of at most $3k+i$ to agent $c_j$, which is strictly lower than her utility of $3k+a_j$ in the initial assignment because $a_j\geq i+1$ (otherwise $c_j$ would get utility 0 from $f_i^R$). Since the number of objects is twice the number of agents, we can conclude that $p$ assigns exactly 2 objects to every agent.


\begin{figure}
\begin{center}
\includegraphics[clip,height=8.5cm,keepaspectratio]{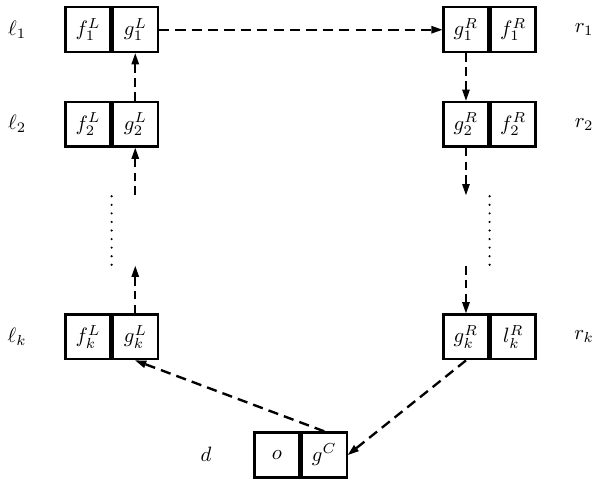}
\end{center}
\caption{Initial assignment for agents of $L\cup R\cup \{ d\}$.}
\label{figCycleProof}
\end{figure}

Let us focus first on the objects of $G$. These objects are the only ones that can provide a utility of at least $3-\varepsilon$ to the agents of $L\cup R\cup \{ d\}$. Any other object provides a utility of at most $1+\varepsilon$ to the agents of $L\cup R\cup \{ d\}$. Therefore, each agent in $L\cup R\cup \{ d\}$ should receive exactly one object from $G$ to achieve a utility of at least 4 since $|L\cup R\cup \{ d\}|=|G|=2k+1$. Figure \ref{figCycleProof} illustrates the initial assignment for the agents of $L\cup R\cup \{ d\}$. In this figure, a dashed-line arrow from an object of $G$ means that the agent pointed by the arrow is the only one that has a utility value different from $\varepsilon$ for the object. Figure \ref{figCycleProof} illustrates the fact that the objects of $G$ could be allocated in only two different manners in $p$ to be a Pareto improvement of the initial endowment: either every object of $G$ is assigned to the same agent as in the initial assignment, or every object of $G$ is assigned to the agent pointed at by the corresponding arrow in Figure \ref{figCycleProof}.

First, we consider the case where all objects of $G$ are assigned in $p$ exactly as in the initial assignment. To achieve a utility of at least 4, every agent $r_i$ should receive object $f_i^R$ to complete her bundle of two objects. This implies that these objects cannot be assigned to agent $c_i$, with $i=1\ldots k$, in order to ensure that they get a utility of at least $3k+a_i$. Therefore, each agent $c_i$ should receive object $h_i^{CR}$ with utility $3k$. Furthermore no agent $c_i$ can receive an object $f_j^L$ to complete her bundle of two objects because this object would provide her a utility of at most $a_i-1$. Hence, each agent $c_i$ should receive object $h_i^{CL}$. All in all, $p$ should be exactly the same assignment as the initial assignment, which contradicts our assumption that $p$ Pareto dominates this initial assignment.

From the previous paragraphs, we know that any object of $G$ should be assigned in $p$ to the agent pointed at by the corresponding dotted arrow in Figure \ref{figCycleProof}. To achieve a utility of at least 4, any agent $\ell_i$ should receive object $h_i^{CL}$ to complete her bundle of two objects. If agent $c_i$ do not receive at least one object $f_j^R$ such that $a_i\geq j+1$, then the maximal utility achievable by $c_i$ would be $3k+a_i-1$, which would be strictly lower than her utility in the initial assignment. Hence, each agent $c_i$ should receive exactly one object $f_j^R$ such that $a_i\geq j+1$. This implies that no object $f_i^R$ can be assigned to agent $r_i$. Hence, to achieve a utility of at least 4, any agent $r_i$ should receive object $h_i^{CR}$ to complete her bundle of two objects. Then object $o$ should be assigned to agent $d$ to complete her bundle of two objects. Finally it remains to assign to each agent $c_i$ object $f_j^L$ such that $a_i\geq j+1$.

Now we focus on the pair of objects assigned to each agent $c_i$ in $p$. Note that these two objects belong to $F$. We know that the total amount of utility provided by the objects of $F$ to the agents of $C$ should be exactly equal to $3k^2+k(k+1)$. Furthermore, any agent $c_i$ should receive a share of at least $3k+a_i$ of this total amount of utility. Since $\sum_{i=1}^k(3k+a_i)=3k^2+k(k+1)$, any agent $c_i$ should receive two objects $f_j^L$ and $f_{j'}^R$ such that $u_{c_i}(\{f_j^L, f_{j'}^R\})=3k+a_i$. Let $\pi$ and $\theta$ be the two permutations of $\{ 1, \ldots , k\}$ such that for any $i=1\ldots k$, objects $f_{\pi(i)}^L$ and $f_{\theta(i)}^R$ are assigned in $p$ to agent $c_i$. These two permutations are such that for any $i=1\ldots k$, $\pi(i)+\theta(i)=a_i$. This leads to a contradiction with our assumption that $\{ a_i: i=1\ldots k\}$ is a no-instance.
\end{proof}

%

Note that  Theorem \ref{th:testpo-conpc} is tight according to the number of objects initially owned by each agent because if initially each agent has exactly one object in assignment $p$, then our problem is well-known to be solvable in linear time~\citep{ACMM05a}.


Our final result show that even if agents utility for objects are bounded to three values then the complexity of testing Pareto optimality remains strongly coNP-complete.

\begin{theorem}\label{th:testpo-three}
	Testing Pareto optimality of a given assignment is strongly coNP-complete,
	even if there are only three utility values.				 	 \end{theorem}
	\begin{proof}
The reduction is from (3,B2)-SAT \cite{Kratochvl1993OneMO} which is a restriction of 3SAT where each literal appears exactly twice in the clauses, and therefore, each variable appears four times. An instance of (3,B2)-SAT is composed by set $X=\{x_1,...,x_s\}$ of $s$ variables and set $C=\{c_1,...,c_t\}$ of $t$ clauses. Each clause in $C$ contains exactly 3 literals. The decision problem is to define whether there exists a truth assignment of $X$ such that all clauses in $C$ are true.

The reduction from an instance of (3,B2)-SAT is as follows. Let $q$ denote the initial endowment. For each variable $x_i$, we create two agents $V_i$ and $W_i$ with $q(V_i)=\{v_i^1, v_i^2\}$ and $q(W_i)=\{w_i^1, w_i^2\}$. Both $V_i$ and $W_i$ have utility 1 for each good in their respective initial endowment, and both of them have utility $t$ for $h_i$.
Furthermore, for each clause $c_j$ we create agent $Z_j$  with $q(Z_j)=\{z_j\}$. Agent $Z_j$ has utility 1 for $z_j$, and she has utility 1 for any good corresponding to a literal of $c_j$ i.e., she has utility 1 for $v_i^1$ and $v_i^2$ ($w_i^1$ and $w_i^2$, respectively) if $x_i$ ($\neg x_i$, respectively) is a literal appearing in $c_j$. Finally, we create two additional agents $A$ and $H$ with $q(A)=\{a\}$ and $q(H)=\{h_1,...,h_t\}$. Agent $A$ has utility $t$ for $a$, and she has utility 1 for each object $z_j$, with $j=1..t$. Agent $H$ has utility 1 for each good in $q(H)$, and she has utility $t$ for $a$. All  the other utility values are equal to $\varepsilon=\frac{1}{6t}$. The following table summarize the utility values which differ from $\varepsilon$, where agt\#1 is the agent which receives the object in the initial assignment and $u_{agt\#1}$ is her utility for it, and where agt(s)\#2 lists the other agents having a utility different from $\varepsilon$ for the object and $u_{agt(s)\#2}$ corresponds to their utility for it:

$$
	\begin{array}{c||c|c||c|c|c}
	\hbox{object}&\hbox{agt}\#1&u_{\hbox{agt}\#1}&\hbox{agt(s)}\#2&u_{\hbox{agt(s)}\#2}\\
	\hline
           v_i^{\ell}&V_i&1&Z_j \hbox{ where } c_j \hbox{ contains }x_j&1\rule[-3pt]{0pt}{13pt}\\
	\hline
	w_i^{\ell}&W_i&1&Z_j \hbox{ where } c_j \hbox{ contains }\neg x_j&1\rule[-3pt]{0pt}{13pt}\\
	\hline
	z_j&Z_j&1&A&t\rule[-3pt]{0pt}{13pt}\\
	\hline
	h_i&H&1&V_i&t\rule[-3pt]{0pt}{13pt}\\
	&&&W_i&t\\
	\hline
	a&A&t&H&t\rule[-3pt]{0pt}{13pt}\\
	\end{array}
$$
We claim that the instance of (3,B2)-SAT is a yes-instance if and only if $q$ is not Pareto optimal.

Assume that there exists truth assignment $\varphi$ of $X$ such that all clauses in $C$ are true. We show that we can construct  from $\varphi$ an assignment $p$ that Pareto dominates $q$. For each variable $x_i$ that is true according to $\varphi$, $p(V_i)=\{ h_i\}$ and $p(W_i)=\{w_i^1, w_i^2\}$. For each variable $x_i$ that is false according to $\varphi$, $p(V_i)=\{v_i^1, v_i^2\}$ and $p(W_i)=\{h_i\}$. Note that in both cases, $u_{V_i}(p(V_i))\geq u_{V_i}(q(V_i))=2$ and $u_{W_i}(p(W_i))\geq u_{W_i}(q(W_i))=2$ hold, and at least one of these inequalities is strict. For each clause $c_j$ and for each literal $x_i$ ($\neg x_i$, respectively) appearing in $c_j$, if $x_i$ is true (false, respectively) according to $\varphi$ then $v_i^{\ell}$ ($w_i^{\ell}$, respectively) is assigned to $Z_j$ in $p$, where $\ell$ equals 1 if the first occurrence of $x_i$ ($\neg x_i$, respectively)  is in $c_j$ and $\ell$ equals 2 otherwise. Since there is at least one literal true according to $\varphi$ for each clause in $C$, at least one object is assigned to each $Z_j$ in $p$ and $u_{Z_j}(p(Z_j))\geq u_{Z_j}(q(Z_j))=1$ holds. Finally, $p(H)=\{ a\}$ and $p(A)=\{z_1, \ldots , z_t\}$. Note that both $u_{A}(p(A))=u_{A}(q(A))=t$ and $u_{H}(p(H))=u_{H}(q(H))=t$ hold. Therefore, $p$ Pareto dominates $q$.

Assume now that there exists an assignment $p$ that Pareto dominates $q$. 
Note first that $p(V_i)=q(V_i)$ and $p(W_i)=q(W_i)$ cannot hold for any $i=1..t$. Indeed, otherwise it is easy to check that each agent should receive in $p$ the same bundle than in $q$ to ensure that each of them receives a bundle at least as good as her bundle in $p$, leading to a contradiction with $p$ Pareto dominates $q$. This means that for some $k\in \{ 1, \ldots , t\}$, either $V_k$ or $W_k$ receive a bundle different from $q$. In both cases, this bundle should contain $h_k$ to ensure that the utility of $V_k$ or $W_k$ does not decrease. But this implies that $h_k$ cannot be assigned to $H$ in $p$. To ensure $u_H(p(H))\geq u_H(q(H))=t$, agent $H$ should receive $a$. Since $a$ is not assigned to $A$, $\{z_1, \ldots , z_t \}\subseteq p(A)$ should hold to ensure $u_A(p(A))\geq u_A(q(A))=t$. Hence, no agent $Z_j$ receives object $z_j$. To achieve an utility greater than or equal to $q(Z_j)=1$, each agent $Z_j$ should receive in $p$ either \textit{(i)} good $v_i^{\ell}$ corresponding to one of its positive literals, or \textit{(ii) }good $w_i^{\ell}$ corresponding to one of its negative literals. In case \textit{(i)}, $v_i^{\ell}$ is not assigned to $V_i$, and $u_{V_i}(p(V_i))\geq u_{V_i}(q(V_i))$ implies $h_i\in p(V_i)$. Therefore, $h_i$ is not assigned to $W_i$, and $p(W_i)=q(W_i)=\{ w_i^1, w_i^2\}$ is necessary to ensure $u_{W_i}(p(W_i))\geq u_{W_i}(q(W_i))$. This means that no agent $Z_{j'}$ receives object $w_i^1$ or $w_i^2$. In case \textit{(ii)}, the same type of reasoning leads to conclude that if agent $Z_j$ receives $w_i^{\ell}$ then no agent $Z_{j'}$ receives object $v_i^1$ or $v_i^2$. Therefore, we can construct from $p$ a truth assignment of $X$ that satisfy all clauses in $C$ by setting the truth value of variable $x_i$ to true if object $v_i^1$ or $v_i^2$ is assigned to some agent $Z_j$, and to false otherwise. In other words, the instance of (3,B2)-SAT is a yes-instance.
\end{proof}

\subsection{Complexity of testing Pareto optimality: tractable cases}	

We now identify conditions under which the problem of computing individually rational and Pareto optimal assignments is polynomial-time solvable.
	
\subsubsection{Constant number of agents and small utilities}

We first focus on the case of a constant number of agents and small utility weights. 
					
						\begin{lemma}\label{lemma:dp}
If there is a constant number of agents, then the set of all vectors of utilities that correspond to an assignment can be computed in pseudo-polynomial-time.
						\end{lemma}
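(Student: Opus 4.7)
The plan is a straightforward dynamic program over the objects, with the state being the tuple of utilities accumulated so far by each agent. Since $n$ is constant, the state space is polynomial in the magnitude of the utilities, which is what yields pseudo-polynomial running time.

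More concretely, let $U = \max_{i\in N} u_i(O)$ be an upper bound on the utility any single agent can accrue, so every coordinate of a feasible utility vector lies in $\{0,1,\ldots,U\}$. For $k\in\{0,1,\ldots,m\}$, define $S_k \subseteq \{0,1,\ldots,U\}^n$ to be the set of vectors $(v_1,\ldots,v_n)$ such that there exists a partition $(B_1,\ldots,B_n)$ of $\{o_1,\ldots,o_k\}$ with $u_i(B_i)=v_i$ for every $i\in N$. I would initialise $S_0=\{(0,\ldots,0)\}$ and use the recurrence
\[
S_{k+1} \;=\; \bigcup_{i=1}^{n}\bigl\{\, v + u_i(o_{k+1})\, e_i \;:\; v\in S_k \,\bigr\},
\]
where $e_i$ is the $i$-th standard basis vector. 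Correctness of this recurrence is immediate from the fact that a partition of $\{o_1,\ldots,o_{k+1}\}$ is obtained from a partition of $\{o_1,\ldots,o_k\}$ by appending $o_{k+1}$ to exactly one bundle; a routine induction on $k$ then gives $S_m=\{(u_1(p(1)),\ldots,u_n(p(n))) : p\in\mathcal{X}\}$, which is exactly the set of achievable utility vectors.

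For the complexity analysis, each $S_k$ is contained in $\{0,1,\ldots,U\}^n$ and hence has size at most $(U+1)^n$. Since $n$ is a fixed constant, this bound is polynomial in $U$. Computing $S_{k+1}$ from $S_k$ requires $n\cdot|S_k|$ vector additions and membership tests, each of which can be performed in time polynomial in $n$ and $\log U$ using, e.g., a hash table or a direct $n$-dimensional table indexed by utility vectors. Iterating over the $m$ objects gives a total running time of $O\bigl(m\cdot n\cdot (U+1)^n\bigr)$, which is pseudo-polynomial for constant $n$.

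I do not expect any real obstacle here: the argument is essentially the standard multidimensional knapsack DP. The only point that deserves a line of justification is that storing and looking up vectors from $\{0,\ldots,U\}^n$ can be done efficiently; I would handle this by explicitly allocating an $n$-dimensional boolean array of size $(U+1)^n$, which is polynomial in the input (in the unary encoding) as soon as $n$ is constant.
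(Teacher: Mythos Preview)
Your proposal is correct and essentially identical to the paper's proof: both run the same dynamic program over objects with state given by the $n$-tuple of accumulated utilities, bound the state space by $(U+1)^n$ (the paper uses the maximal social welfare $W$ in place of your $U$, but these are within a polynomial factor), and prove correctness by the same induction on $k$. Your write-up is in fact slightly more careful about the data-structure aspect of maintaining $S_k$, which the paper leaves implicit.
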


\begin{proof}
Consider the following algorithm (by $0^k$ we denote $0, \ldots, 0$ with $k$ occurrences of $0$).
\begin{algorithm}[h!]
\label{algo1v1}
\begin{algorithmic}[1]
	\footnotesize
\STATE $L \leftarrow \{ (0^n) \}$;
\FOR{$j = 1$ to $m$}
\STATE{$L' \leftarrow \{ l + (0^{i-1}, u_i(o_j), 0^{n-i}) \ | \ i \in N; l \in L\}$}
\STATE{$L \leftarrow L'$}
\ENDFOR
\RETURN $L$
\end{algorithmic}
\end{algorithm}

Let $W$ be the maximal social welfare that is achievable; then, at any step of the algorithm, the number of vectors in $L$ cannot exceed $(W+1)^n$. Hence the algorithm runs in $O(W^n\cdot n \cdot m)$. Now, $W \leq \sum_{i,j} u_i(o_j)$, and since $n$ is constant, the algorithms runs in pseudo-polynomial-time.

We can prove by induction on $k$
that a vector of utilities $l = (v_1, \ldots, v_n)$ can be achieved by assigning objects $o_1, \ldots, o_k$ to the agents if  and only if $l$ belongs to $L$ after objects $o_1, \ldots, o_{k}$
have been considered.
This is obviously true at the start of the algorithm, when no object at all has been considered. Now, suppose the induction assumption is true for $k$. If $l$ belongs to $L$ after iteration $k$, then $l'$ belongs to $L$ after iteration $k+1$ if
$l'$ is obtained from $l$ by adding $u_i(o_k)$ to the utility of some agent $i$, that is, if $l = (v_1, \ldots, v_n)$ can be achieved by assigning objects $o_1, \ldots, o_{k+1}$.
\end{proof}

\begin{theorem}\label{theo:pseudopoly}
If there is a constant number of agents, then there exists a pseudo-polynomial-time algorithm to compute a Pareto optimal and individually rational assignment.
\end{theorem}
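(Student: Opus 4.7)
The plan is to reduce to Lemma~\ref{lemma:dp}. Let $e \in \mathcal{X}$ be the initial endowment. First I would run the dynamic program from Lemma~\ref{lemma:dp} to obtain, in pseudo-polynomial time, the set $L$ of all utility vectors $(v_1, \ldots, v_n)$ attainable by some assignment. Augment the algorithm so that when a vector $l'$ is produced from $l \in L$ by adding $u_i(o_j)$ in coordinate $i$, we store a backpointer $(l, i)$ together with $l'$; this only multiplies the memory by a constant and lets us reconstruct, for any $l \in L$ at the end, a witness assignment $p_l$ realizing the utilities $l$ by walking the backpointers from object $o_m$ down to $o_1$.

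Next, filter $L$ to keep only the \emph{individually rational} vectors, namely those $l = (v_1, \ldots, v_n)$ with $v_i \geq u_i(e(i))$ for every $i \in N$; call this set $L_{\mtext{IR}}$. The set $L_{\mtext{IR}}$ is non-empty because the endowment vector $(u_1(e(1)), \ldots, u_n(e(n)))$ itself belongs to $L$ by Lemma~\ref{lemma:dp} (realized by $e$) and trivially satisfies the individual rationality inequalities. Within $L_{\mtext{IR}}$, compute a vector $l^*$ that is undominated with respect to the coordinatewise partial order; a simple way is to pick any $l \in L_{\mtext{IR}}$ and, while there exists $l' \in L_{\mtext{IR}}$ Pareto dominating $l$, replace $l$ by such an $l'$. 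This terminates after at most $|L_{\mtext{IR}}|$ steps because each replacement strictly increases $\sum_i v_i$, which is bounded by $W$.

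Finally output the witness assignment $p_{l^*}$. It is individually rational by construction (since $l^* \in L_{\mtext{IR}}$), and it is Pareto optimal: any assignment $q$ Pareto-dominating $p_{l^*}$ would produce a utility vector in $L$ dominating $l^*$, and that vector would still lie in $L_{\mtext{IR}}$ (dominating an IR vector preserves IR), contradicting the choice of $l^*$. The total running time is $|L| \cdot \mtext{poly}(n,m)$ for building $L$ and reconstructing, plus $O(n |L_{\mtext{IR}}|^2)$ for the undominated search, which is pseudo-polynomial since $|L| \leq (W+1)^n$ and $n$ is constant.

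The main subtlety, and the only thing beyond Lemma~\ref{lemma:dp} itself, is the assignment reconstruction: Lemma~\ref{lemma:dp} is stated purely in terms of utility vectors, so one must explicitly record backpointers (or one representative assignment per vector) during the DP, and argue that doing so does not blow up the pseudo-polynomial bound. Everything else---the individual-rationality filter and the Pareto-dominance search inside $L_{\mtext{IR}}$---is straightforward pointwise manipulation of the vector set.
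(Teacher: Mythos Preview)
Your proposal is correct and follows essentially the same approach as the paper: augment the dynamic program of Lemma~\ref{lemma:dp} with witness assignments, then locate an undominated utility vector and return its witness. The only notable difference is that you handle individual rationality explicitly by restricting to $L_{\mtext{IR}}$ before searching for an undominated vector, whereas the paper simply identifies all Pareto-undominated vectors in $L$ and leaves implicit the (easy) step of selecting one that weakly dominates the endowment; your treatment is slightly more careful on this point.
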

								
\begin{proof}
We apply the algorithm of Lemma \ref{lemma:dp}, but in addition we keep track, for each $l \in L$, of a partial assignment that supports it: every time we add $l + (0^{i-1}, u_i(o_j), 0^{n-i})$ to $L'$, the corresponding partial assignment is obtained from the partial assignment corresponding to $l$, and then mapping $o_j$ to $i$. If several partial assignments correspond to the same utility vector, we keep an arbitrary one. At the end, we obtain the list $L$ of all feasible utility vectors, together with, for each of them, one corresponding assignment. For each of them, check whether there is at least one $l'$ in $L$ that Pareto dominates it, which takes at most $O(|L|^2)$, which is polynomially large if $L$ is encoded in unary.
The assignments that correspond to the remaining vectors are Pareto optimal.\footnote{Note that it is generally not the case that we get {\em all} Pareto optimal assignments: if there are several assignments corresponding to the same utility vector, then we'll obtain only one.}
\end{proof}

	\subsubsection{Lexicographic Utilities}

We say that utilities are \emph{lexicographic} if for each agent $i\in N$ and each object $o\in O$, {$u_i(o)>\sum_{o'\prec_i o}u_i(o')$} with the convention $\sum_{o \in \emptyset} u_i(o) = 0$, which implies that $u_i(o) > 0$ for each $o$.
Note that agent 1 in Example~\ref{example-basic} has lexicographic utilities.
So would be an agent with utilities $(8,8,3,3,1)$. 


In order to test the Pareto optimality of an assignment $p$, we construct a directed graph $G(p)=(V(p), E(p))$. The set of vertices $V(p)$ contains one vertex per object belonging to $O$. Furthermore, for any vertex of $V(p)$ associated to object $o$, the set of edges $E(p)$ contains one edge $(o, o')$ for any object $o'$ belonging to $O\setminus \{ o\}$ such that $o'\succsim_i o$, where $i$ is the agent who receives object $o$ in $p$.

	\begin{example}\label{example-lex0}
Let $n = 3$, $m = 5$, and the following  ordinal information about preferences corresponding to the lexicographic utilities. (Due to the lexicographicity assumption, this ordinal information is enough to know completely an agent's preference relation over all sets of objects.)
$$
\begin{array}{c}
1: o_1 \succ o_2 \succ o_3 \succ o_4 \succ o_5\\
2: o_1 \succ o_2 \sim o_5 \succ o_3 \sim o_4\\
3: o_1 \sim o_4 \succ o_3 \sim o_5 \succ o_2
\end{array}
$$
Let $p=(o_2o_4|o_1|o_3o_5)$ be the initial assignment. 
Then, the corresponding graph $G(p)$ is depicted in Figure \ref{figLex}. In Figure~\ref{figLex}, dashed-line edges represent indifferences (when $o'\sim o$) and solid-line edges represent strict preferences (when $o'\succ o$).
	\end{example}


It follows from \citep{CEF+15a} that Pareto optimality of an assignment under lexicographic utilities can be tested in polynomial time. 
We provide a simple characterization of a Pareto optimal assignment under lexicographic utilities. The characterization we present also provides an interesting connection with the possible Pareto optimality that we consider in the next section.

\begin{figure}
	\begin{center}
	\includegraphics[clip,height=4cm,keepaspectratio]{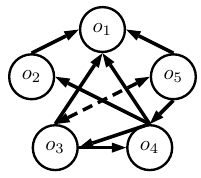}
	\end{center}
	\caption{Graph $G(p)$ for assignment $p$ in Example~\ref{example-lex0}.}
	\label{figLex}
	\end{figure}

\begin{theorem}\label{th:lexi-opt}
An assignment $p$ is not Pareto optimal with respect to lexicographic utilities if and only if there exists a cycle in $G(p)$ which contains at least one edge corresponding to a strict preference.
\end{theorem}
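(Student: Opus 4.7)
I will prove both implications.

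For the ``cycle implies not Pareto optimal'' direction, assume $G(p)$ contains a cycle $o_1 \to o_2 \to \cdots \to o_k \to o_1$ with at least one strict edge. Let $\pi(o)$ denote the agent who holds $o$ in $p$. I would define $q$ via the cyclic swap: for each $s$, remove $o_s$ from its owner $\pi(o_s)$ and reassign $o_{s+1}$ to $\pi(o_s)$ (indices mod $k$), leaving all objects outside the cycle untouched. By additivity, the net utility change of any agent $i$ is $\sum_{s \,:\, \pi(o_s) = i} (u_i(o_{s+1}) - u_i(o_s))$; each term is $\geq 0$ by the edge condition $o_{s+1} \succsim_i o_s$, and the strict edge contributes a $> 0$ term for its source's owner, so $q$ Pareto dominates $p$.

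For the converse direction, suppose $p$ is not Pareto optimal and let $q$ be a Pareto-dominating assignment that minimizes the number of \emph{moved} objects (those held by different agents in $p$ and in $q$). The crucial lex-specific lemma I would prove first is: \emph{whenever $u_j(q(j)) \geq u_j(p(j))$ and $v \in p(j) \setminus q(j)$, there exists $o' \in q(j) \setminus p(j)$ with $o' \succsim_j v$.} The contrapositive argument uses a class-by-class count: if every gained object satisfied $v \succ_j o'$, then for every equivalence class at least as preferred as $v$'s class the count of $q(j)$ would not exceed that of $p(j)$, with a strict deficit at $v$'s class, and the lexicographic property would then give $u_j(p(j)) > u_j(q(j))$, contradicting the hypothesis.

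Using the lemma I would build a walk in $G(p)$. Pick an agent $i^*$ with strict improvement; additivity and positivity of utilities make $q(i^*) \setminus p(i^*)$ nonempty, so fix some $v_0$ in it. Inductively, given $v_t$, set $j_t = \pi(v_t)$ and observe $v_t \in p(j_t) \setminus q(j_t)$ (true at $t=0$ since $v_0 \in q(i^*)$ with $i^* \neq j_0$, and preserved along the walk); the lemma then supplies $v_{t+1} \in q(j_t) \setminus p(j_t)$ with $v_{t+1} \succsim_{j_t} v_t$, and $v_{t+1} \neq v_t$ since they lie on opposite sides of the symmetric difference, giving a bona fide edge of $G(p)$. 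Since $O$ is finite the walk must revisit a vertex, yielding a cycle $C$ in $G(p)$.

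The main obstacle is that, in the presence of indifferences, $C$ could consist entirely of dotted edges. This is precisely where minimality of $q$ enters: if every edge of $C$ is a $\sim$-edge I would form $q'$ from $q$ by returning every $v \in C$ to its $p$-owner $\pi(v)$. Each appearance of an agent $i$ as some $j_k$ in $C$ contributes a net exchange of $v_k$ for $v_{k+1}$, and $v_{k+1} \sim_i v_k$ keeps her utility intact; hence $q'$ still Pareto dominates $p$ but has strictly fewer moved objects than $q$, contradicting minimality. Therefore $C$ must contain at least one strict edge, completing the proof.
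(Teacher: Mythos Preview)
Your proof is correct and follows essentially the same approach as the paper: both directions build a walk in $G(p)$ by repeatedly invoking the lexicographic ``compensation'' property (your lemma is exactly what the paper uses informally when it says agent $i_1$ ``must receive an object $o_2$ in $q_1$ which is at least as good as $o_1$''), extract a cycle, and if the cycle has no strict edge, undo it to obtain a Pareto-dominating assignment strictly closer to $p$. The only cosmetic difference is that you phrase the last step as an extremal argument (choose $q$ with the fewest moved objects, derive a contradiction) whereas the paper phrases it as an explicit iteration $q_1, q_2, \ldots$ with the potential $\sum_i |q_s(i)\cap p(i)|$ strictly increasing; these are two standard packagings of the same idea.
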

	\begin{proof}
Assume that there exists a cycle $C$ that contains at least one edge corresponding to a strict preference.
Then, the exchange of objects along the cycle by agents owning these objects corresponds to a Pareto improvement.

Assume now that $p$ is not Pareto optimal and let $q_1$ be an assignment that Pareto dominates $p$. For at least one agent $i$, $q_1(i)\succ_{i} p(i)$. Therefore, there exists at least one object $o_1$ in $q_1(i)\setminus p(i)$. Let $i_1$ be the owner of $o_1$ in $p$. Since preferences are lexicographic, in compensation of the loss of $o_1$, agent $i_1$ must receive an object $o_2$ in $q_1$ which is at least as good as $o_1$ according to her own preferences. Let $i_2$ be the owner of $o_2$ in $p$ and so on. Since $O$ is finite, there must exist $k$ and $k'$ such that the sequence $o_{k}\rightarrow o_{k+1}\rightarrow \ldots \rightarrow o_{k'}$ forms a cycle, i.e., $o_k=o_{k'}$. If $\not\exists l \in [k, k'-1]$ such that $o_{l+1}\succ_{i_l} o_l$ then we consider assignment $q_2$ derived from $q_1$ by reassigning any object $o_{l+1}$, with $l\in [k, k'-1]$, to agent $i_l$. It is obvious that assignment $q_2$ is at least as good as $q_1$ for all agents. Hence, $q_2$ Pareto dominates $p$. By following the same reasoning as above, we can state that there exists a sequence of objects $o_{k}\rightarrow o_{k+1}\rightarrow \ldots \rightarrow o_{k'}$ such that $o_k=o_{k'}$ and for any $l\in [k, k'-1]$, $o_{l+1}$ is assigned to agent $i_l$ in $q_2$ to compensate the loss of $o_l$ assigned to him in $p$ (obviously with $o_{l+1}\succsim_{i_l} o_{l}$). Once again, if there is no $l \in [k, k'-1]$ such that $o_{l+1}\succ_{i_l} o_l$ then we consider assignment $q_3$ derived from $q_2$ by reassigning any object $o_{l+1}$, with $l\in [k, k'-1]$, to agent $i_l$, and so on. Since for any $s>1$ we have $\sum_{i\in N}|q_{s-1}(i)\cap p(i)|>\sum_{i\in N}|q_{s}(i)\cap p(i)|$, there must exist a finite value $t$ where $\exists l \in [k, k'-1]$ such that $o_{l+1}\succ_{i_l} o_l$ for the cycle $o_{k}\rightarrow o_{k+1}\rightarrow \ldots \rightarrow o_{k'}$ founded in $q_t$. Indeed otherwise after a finite number of steps $t$ we would have $q_t(i)=p(i)$ for all $i\in N$, leading to a contradiction with our assumption that $q_t$ Pareto dominates $p$. Therefore, we have shown that there exists a cycle $o_k\rightarrow \ldots \rightarrow o_{k'}$ in $G(p)$ with at least one edge corresponding to a strict preference.
\end{proof}	

It is clear that graph $G(p)$ can be constructed in linear time for any assignment $p$. Furthermore, the search of a cycle containing at least one strict preference edge in $G(p)$ can be computed in linear time by applying a graph traversal algorithm for each strict preference edge in $G(p)$. Therefore, testing if a given assignment is Pareto optimal can be done in linear time when utilities are lexicographic.

The construction of Theorem \ref{th:lexi-opt}, and illustrated in Figure \ref{figLex}, gives us that assignment $p$ in Example~\ref{example-lex0}
is Pareto dominated by $(o_2o_3|o_1|o_4o_5)$. 


%

		\subsubsection{Two utility values}

In this section we assume that agents use at most two utility values to evaluate objects. We say that the collection of utility functions $(u_1, \ldots, u_n)$ is {\em bivalued}
if there exist two numbers $\alpha > \beta \geq 0$ such that for every agent $i$ and every object $o$, $u_i(o) \in \{\alpha, \beta\}$.\footnote{Note that the result would still hold if each agent $i$ has a different pair of values $(\alpha_i,\beta_i)$, provided that $\frac{\alpha_i}{\beta_i} =  \frac{\alpha_j}{\beta_j}$ for all $i, j$.} This implies that for each agent $i$, the set of objects $O$ is partitioned into two subsets $E_i^1 = \{o \in O, u_i(o) = \alpha\}$ and $E_i^2 = \{o \in O, u_i(o) = \beta\}$ (with possibly $E_i^2 = \emptyset$). Given an assignment $q$, let $q^+(i) = q(i) \cap E^1_i$, and $q^-(i) = q(i) \cap E^2_i$.

		 We provide a first requirement for an assignment to Pareto dominate another one:

		\begin{lemma}\label{lemm2}
		If an assignment $p$
		is Pareto dominated by an assignment $q$ then $|\bigcup_{i\in N}q^+(i)|>|\bigcup_{i\in N}p^+(i)|$.
		\end{lemma}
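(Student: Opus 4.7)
The plan is to reduce the claim to a simple social-welfare calculation that exploits the bivalued structure.

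First I would observe that because each $q(i)$ is partitioned into $q^+(i)$ and $q^-(i)$, one can write the individual utility compactly as
\[
u_i(q(i)) \;=\; \alpha\,|q^+(i)| + \beta\,|q^-(i)| \;=\; \beta\,|q(i)| + (\alpha-\beta)\,|q^+(i)|,
\]
and similarly for $p$. The key structural fact to leverage is that the sets $q^+(i)$ for $i \in N$ are pairwise disjoint (since the $q(i)$'s themselves partition $O$), so $|\bigcup_{i\in N} q^+(i)| = \sum_{i\in N} |q^+(i)|$; the same holds for $p$.

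Next I would sum the previous identity over all agents and use the fact that $\sum_{i\in N} |q(i)| = m = \sum_{i\in N} |p(i)|$ because both $q$ and $p$ are partitions of $O$. This yields
\[
SW(q) \;=\; \beta m + (\alpha-\beta)\,\Big|\bigcup_{i\in N} q^+(i)\Big|,
\qquad
SW(p) \;=\; \beta m + (\alpha-\beta)\,\Big|\bigcup_{i\in N} p^+(i)\Big|.
\]

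Finally, since $q$ Pareto dominates $p$, we have $u_i(q(i)) \geq u_i(p(i))$ for every $i$ with at least one strict inequality, so summing gives $SW(q) > SW(p)$ strictly. Comparing the two expressions above and using $\alpha - \beta > 0$, we immediately conclude
\[
\Big|\bigcup_{i\in N} q^+(i)\Big| \;>\; \Big|\bigcup_{i\in N} p^+(i)\Big|.
\]

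I do not anticipate any substantive obstacle: the main conceptual step is recognizing that bivalued utilities turn social welfare into an affine function of the single integer $|\bigcup_i q^+(i)|$ (since the total number of objects handed out is fixed), after which Pareto dominance — which strictly increases social welfare — forces this integer to strictly increase.
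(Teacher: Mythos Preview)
Your proof is correct and follows essentially the same approach as the paper: both compute the social welfare as $\beta\,|O| + (\alpha-\beta)\,|\bigcup_{i}q^+(i)|$ and use that Pareto dominance forces $SW(q)>SW(p)$. The only cosmetic difference is that the paper phrases the argument by contradiction while you argue directly.
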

		\begin{proof}
		For contradiction we assume that $|\bigcup_{i\in N}q^+(i)|\leq|\bigcup_{i\in N}p^+(i)|$. In that case, $SW(q) = |\bigcup_{i\in N}q^+(i)|\alpha + |\bigcup_{i\in N}q^-(i)|\beta = |\bigcup_{i\in N}q^+(i)|(\alpha-\beta) + |O|\beta \leq |\bigcup_{i\in N}p^+(i)|(\alpha-\beta) + |O|\beta = \alpha|\bigcup_{i\in N}p^+(i)| + \beta |\bigcup_{i\in N}p^-(i)|$.
Therefore, $SW(p) \geq SW(q)$ holds, which contradicts the assumption that $q$ Pareto dominates $p$.
		\end{proof}

		\begin{lemma}\label{lemm1}
		If an assignment $p$
		is not Pareto optimal then there exists an assignment $q$ that
 Pareto dominates $p$ with the two following properties:
 (i) $\forall i\in N, |q^+(i)|\geq |p^+(i)|$ and (ii) $\exists j\in N, |q^+(j)|>|p^+(j)|$ and $p^-(j)\neq \emptyset$.
		\end{lemma}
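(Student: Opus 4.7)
Since $p$ is not Pareto optimal, the plan is to pick $q$ to be a Pareto dominator of $p$ with maximum social welfare; under bivalued utilities this is equivalent to maximising $\sum_i|q^+(i)|$ (because $SW(q)=(\alpha-\beta)\sum_i|q^+(i)|+\beta|O|$), and by Lemma~\ref{lemm2} we have $\sum_i|q^+(i)|>\sum_i|p^+(i)|$. I will show that this $q$, or a small modification of it, satisfies both (i) and (ii).

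For (i), I would proceed by contradiction: assume some $i_0$ has $|q^+(i_0)|<|p^+(i_0)|$. Pareto dominance at $i_0$ then forces $|q^-(i_0)|>|p^-(i_0)|$, so one can pick $o^*\in q^-(i_0)\setminus p^-(i_0)$ and $o_0\in p^+(i_0)\setminus q(i_0)$; let $i_1=q^{-1}(o_0)$. The swap of $o_0$ and $o^*$ between $i_0$ and $i_1$ changes the social welfare by $(\alpha-\beta)+(u_{i_1}(o^*)-u_{i_1}(o_0))$. A case split on the four possible values of $(u_{i_1}(o_0),u_{i_1}(o^*))\in\{\alpha,\beta\}^2$ shows that whenever $u_{i_1}(o_0)=\beta$ or $u_{i_1}(o^*)=\alpha$, the swap is welfare-strict and no agent loses utility, so the swapped assignment still Pareto dominates $p$, contradicting the maximality of $SW(q)$. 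The only surviving configuration is $u_{i_1}(o_0)=\alpha,\ u_{i_1}(o^*)=\beta$, which I would handle by iterating along a chain $i_0,i_1,i_2,\ldots$ (each new index arising as the deficit agent produced by the previous step) and tracking the total deficit $\sum_i\max(0,|p^+(i)|-|q^+(i)|)$ as a secondary potential; by finiteness of $N$ the chain must close, and the resulting composite swap either strictly increases $SW$ (contradicting maximality) or strictly decreases the deficit while preserving $SW$ (contradicting the secondary minimality).

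For (ii), given (i), Lemma~\ref{lemm2} produces a surplus agent $j$ with $|q^+(j)|>|p^+(j)|$. If $p^-(j)\ne\emptyset$ the current $q$ suffices. Otherwise every surplus agent is \emph{bad}, so $p^-(j)=\emptyset$ and $p(j)=p^+(j)$; picking $o\in q^+(j)\setminus p^+(j)$ and $k=p^{-1}(o)\ne j$, the argument splits: if $o\in p^-(k)$ then $p^-(k)\ne\emptyset$ and moving some $\alpha$-for-$k$ object out of $j$'s surplus bundle into $k$'s bundle creates new surplus at $k$ while preserving (i); if $o\in p^+(k)$ with $p^-(k)\ne\emptyset$, moving $o$ from $j$ back to $k$ shifts the surplus to $k$ directly; and if $p^-(k)=\emptyset$, iterate with $k$ in place of $j$, the chain terminating at a good agent by a similar finiteness argument.

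The main obstacle is the surviving sub-case in the proof of (i): pairwise swaps are welfare-neutral there, so the maximality of $SW(q)$ alone yields no contradiction, and one must propagate along a chain of agents keeping track of a secondary deficit potential. A structurally similar chain/rerouting argument reappears in (ii) when the surplus lies at a bad agent. The detailed bookkeeping of these chains is the technical heart of the proof.
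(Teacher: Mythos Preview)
Your strategy differs from the paper's: you try to keep the invariant ``$q$ Pareto dominates $p$'' throughout, choosing $q$ of maximum social welfare and then repairing deficits by local swaps. The paper instead drops this invariant completely: it starts from any dominator $q_*$, strips it down to the partial assignment $q(i)=q_*^+(i)$ of top objects only, and then greedily moves each missing top object $o\in p^+(i)\setminus q^+(i)$ to $i$ (possibly stealing it from another agent). Since only top-object counts are tracked, no utility bookkeeping is needed, and Lemma~\ref{lemm2} guarantees that after all repairs the total number of assigned top objects still exceeds $\sum_i|p^+(i)|$, so some agent has surplus.

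There is a genuine gap in your argument for (i). In the ``surviving'' case $u_{i_1}(o_0)=\alpha$, $u_{i_1}(o^*)=\beta$, the swap is welfare-neutral but strictly decreases $u_{i_1}$ by $\alpha-\beta$. Nothing guarantees $u_{i_1}(q(i_1))\ge u_{i_1}(p(i_1))+(\alpha-\beta)$, so the swapped assignment need not Pareto dominate $p$ any more; you have left the feasible set over which $SW$ was maximised, and the intended contradiction evaporates. Your chain is also ill-defined: to continue from $i_1$ you need $i_1$ itself to have a deficit and an object in $q^-(i_1)\setminus p(i_1)$, neither of which follows. The secondary ``deficit'' potential does not strictly decrease either, since $i_0$'s deficit drops by one while $i_1$'s may rise by one. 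This is precisely why the paper abandons the Pareto-dominance invariant.

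Your argument for (ii) has a similar problem. In the case $o\in p^-(k)$ you propose to move ``some $\alpha$-for-$k$ object out of $j$'s surplus bundle'', but there is no reason $q(j)$ contains any object that $k$ values at $\alpha$; the object $o$ you picked is $\beta$ for $k$. The paper handles this globally: with $A=\{i:p^-(i)\ne\emptyset\}$ and $B=\{i:|q(i)|>|p^+(i)|\}$, if $A\cap B=\emptyset$ and no object in $\bigcup_{j\in B}q(j)$ or in $O\setminus\bigcup_j q(j)$ is top for any $A$-agent, then a direct social-welfare comparison shows $SW(q_*)\le SW(p)$, contradicting that $q_*$ dominates $p$. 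A local rerouting argument of the kind you sketch does not suffice here.
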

		
				\begin{proof}
Assume that $p$ is not Pareto optimal. Then there exists an assignment which Pareto dominates $p$. Among the set of assignment that Pareto dominates $p$, let $q_*$ be the one that is the closest to $p$ i.e., $|\bigcup_{i\in N}q_*(i)\setminus p(i)|$ is minimal.

First we note that the above assumption implies that there is no \emph{clear winner} agent $i$ such that $p^-(i)=\emptyset$ and $q_*^+(i)\supset p^+(i)$. Indeed, otherwise we could reallocate each object in $q_*^+(i)\setminus p^+(i)$ to its owner in $p$, and obtain another assignment $q_{**}$ from $q_*$ that also Pareto dominates $p$, but which is closer to $p$ than $q_*$.

Lemma \ref{lemm2} implies that $|\bigcup_{i\in N}q_*^+(i)|>|\bigcup_{i\in N}p^+(i)|$ holds. Let $o_1$ be an object of $(\bigcup_{i\in N}q_o^+(i))\setminus (\bigcup_{i\in N}p^+(i))$. Assume that object $o_1$ belongs to $q_*^+(i_2)$ for a given agent $i_2$, and to $p^-(i_1)$ for another agent $i_1$. If $p^-(i_2)\neq \emptyset$ then starting from $p$, $i_1$ and $i_2$ could exchange $o_1$ with an object of $p^-(i_2)$ leading to an assignment $q$ where both (i) and (ii) hold. Otherwise, if $p^-(i_2)=\emptyset$, then let $o_2\in p^+(i_2)\setminus q_*^+(i_2)$ (which must exists since $i_2$ is not a clear winner). Assume that $o_2$ belongs to $q_*(i_3)$ for a given agent $i_3$. Note that $o_2$ must belong to $q_*^+(i_3)$, as otherwise $i_2$ and $i_3$ could exchange $o_1$ and $o_2$ in $q_*$ and we would obtain another assignment $q_{**}$ that still Pareto dominates $p$, but which is closer to $p$. Now, again, if $p^-(i_3)\neq \emptyset$ then starting from $p$ we could create a Pareto dominating assignment $q$ with properties (i) and (ii) by exchanging objects along this cycle, namely, by assigning $o_1$ to $i_2$, $o_2$ to $i_3$ and $o_3$ to $i_1$, where $o_3$ is an object of $p^-(i_3)$. However, if $p^-(i_3)= \emptyset$ then we continue the construction of the sequence.

The last case that we have to discuss is a possible repetition occurring in the above sequence. Suppose that for some indices $k<l$, $o_l\in q_*^+(i_k)$ for the first time in the sequence. Therefore, the agents involved in this sub-sequence exchange their top objects in $q_*$ compared to $p$. But then we can construct another assignment $q_{**}$ from $q_*$ by reassigning these objects to their original owners in $p$, contradicting with our assumption of $q_*$ being as close to $p$ as possible.
		\end{proof}

Based on the lemma, we obtain the following characterization of Pareto optimality in the bivalued case.

		\begin{lemma}\label{prop2util}
		An assignment $p$
		is Pareto dominated if and only if there exists an assignment $q$ such that (i) $\forall i\in N, |q^+(i)|\geq |p^+(i)|$ and (ii) $\exists j\in N, |q^+(j)|>|p^+(j)|$ and $p^-(j)\neq \emptyset$.
		\end{lemma}
		\begin{proof}
		One implication has already been proved in Lemma \ref{lemm1}. To prove the second implication we assume first that there exists $q$ such that (i) and (ii) hold. Let $j$ be the index described in (ii). For any $i\in N\backslash \{ j\}$, let $A_i$ be an arbitrary subset of $q^+(i)$ of size $|A_i|=|p^+(i)|$. Let $A_j$ be an arbitrary subset of $q^+(j)$ of size $|A_j|=|p^+(j)|+1$. Let $\bar{A}=O\setminus \bigcup_{i\in N}A_i$. Note that by definition $|\bar{A}|=|\bigcup_{i\in N}p^-(i)|-1$ because all objects are assigned, and therefore
$	|\bar{A}|=|O|-|\bigcup_{i\in N}A_i|=\sum_{i\in N}|p(i)|-\sum_{i\in N}|p^+(i)|-1
	=	\sum_{i\in N}|p^-(i)|-1.
$

		We arbitrarily partition $\bar{A}$ into $n$ subsets $\bar{A}_1, \ldots \bar{A}_n$ such that for all $i\in N\setminus \{ j\}, |\bar{A}_i|=|p^-(i)|$ and $|\bar{A}_j|=|p^-(j)|-1$.
Finally, let $q_*$ be the assignment such that for all $i\in N, q_*(i)=A_i\cup \bar{A}_i$.
By the construction of $q_*$, $|q_*^+(i)|\geq |p^+(i)|$ and $|q_*(i)|=|p(i)|$ hold for every $i\in N$, and $|q_*^+(j)|>|p^+(j)|$ and $|q_*(j)|=|p(j)|$ hold.
Therefore, $p$ is Pareto dominated by $q_*$.\end{proof}

If at least one object is not assigned then a trivial Pareto improvement would be to assign this object to an agent. Therefore, we can focus on the case where all objects are assigned. According to Theorem~\ref{prop2util}, a Pareto improvement can be computed by focusing on the assignment of top objects for the agents. We describe an algorithm based on maximum flow problems to obtain such an assignment. For any $i\in N$, let $G_i=(V_i, E_i)$ be a directed graph that models the search for a Pareto improvement for agent $i$ as a flow problem. Set of vertices $V_i$ contains one vertex per agent, one object per object, plus source $s$ and sink $t$. To ease the notation, vertices are denoted as the agents or the objects that they represent, and therefore, $V_i =N \cup O \cup \{ s, t\}$. The set of edges $E_i$ and their capacities are constructed as follow:
\begin{itemize}\setlength\itemsep{0em}
\item For any $l\in N$ and $o\in O$ such that $o\in E_l^1$, there is an arrow $(l, o)$ with capacity 1.
\item For any $o\in O$, there is an arrow $(o, t)$ with capacity 1.
\item For any $l\in N\setminus \{ i\}$, there is an arrow $(s, l)$ with capacity $|p^+(l)|$, and there is an edge $(s, i)$ with capacity $|p^+(i)|+1$.
\end{itemize}
It is easy to show that there exists a flow of value $\sum_{l\in N}|p^+(l)|+1$ if and only if there exists an assignment such that any agent $l\in N\setminus \{ i\}$ receives at least $|p(l)\cap E_l^1|$ top objects and agent $i$ receives $|p^+(i)|+1$ top objects. Hence, by Lemma~\ref{prop2util}, there exists a Pareto improvement of $p$ if and only if there exists $i\in N$ such that $p(i)\cap E_i^2\neq \emptyset$ and there exists a flow of value $\sum_{l\in N}|p^+(l)|+1$ in $G_i$. Therefore, finding a Pareto improvement can be performed in polynomial time by solving at most $n$ maximum flow problems.

\begin{theorem}\label{theo:bivalued}
Under bivalued utilities, there exists a polynomial-time algorithm for checking Pareto optimality and finding a Pareto improvement, if any.
\end{theorem}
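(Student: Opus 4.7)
The plan is to turn the characterization of Theorem~\ref{prop2util} into an efficient feasibility test via bipartite matching. Observe that conditions (i) and (ii) only constrain the $\alpha$-valued parts $q^+(i)\subseteq E_i^1$ of the candidate dominator $q$; the $\beta$-valued parts can always be filled in afterwards using the construction in the second half of the proof of Theorem~\ref{prop2util}. So the search for $q$ reduces to the following purely combinatorial question: do there exist pairwise disjoint sets $A_1,\ldots,A_n$ with $A_i\subseteq E_i^1$, $|A_i|\geq |p^+(i)|$ for all $i\in N$, and $|A_j|\geq |p^+(j)|+1$ for some agent $j$ with $p^-(j)\neq\emptyset$?

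The key step is to phrase this as a bipartite $b$-matching. I would build the bipartite graph $H=(N\cup O,F)$ with $\{i,o\}\in F$ iff $u_i(o)=\alpha$, and observe that the initial assignment $p$ itself realises a feasible selection of disjoint $\alpha$-sets $A_i=p^+(i)\subseteq E_i^1$ with $|A_i|=|p^+(i)|$. For each agent $j$ with $p^-(j)\neq\emptyset$ (there are at most $n$ such candidates), I would then ask whether this $b$-matching can be augmented so that agent $j$ gains one additional $\alpha$-valued object while no other agent loses any. Equivalently, one solves a flow problem on the directed network built from $H$ with a source $s$ sending, to each agent $i$, an arc of capacity $|p^+(i)|$ (respectively $|p^+(j)|+1$ when $i=j$), unit-capacity arcs $i\to o$ for every $\{i,o\}\in F$, and unit-capacity arcs $o\to t$ into a sink $t$, and checks whether an integral flow saturating all source arcs exists. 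This is a classical polynomial-time task, and since the base flow induced by $p^+$ already saturates every source arc except for the unit excess at $j$, it actually suffices to search for a single augmenting path in the residual graph.

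If for some $j$ the augmentation succeeds, the resulting $b$-matching furnishes the sets $A_i$, and the construction in the proof of Theorem~\ref{prop2util} turns them into a complete assignment $q'$ Pareto dominating $p$: distribute the $|O|-\sum_i|A_i|$ leftover objects among agents so that each $i\neq j$ receives exactly $|p^-(i)|$ of them and $j$ receives $|p^-(j)|-1$. If no choice of $j$ admits such an augmentation, then by Theorem~\ref{prop2util} the assignment $p$ is Pareto optimal. The whole procedure runs in time polynomial in $n$ and $m$.

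The main obstacle, I expect, is showing that the augmentation formulation simultaneously enforces both requirements: that no agent loses $\alpha$-objects and that the specific agent gaining an $\alpha$-object has $p^-(j)\neq\emptyset$. Iterating over the admissible choices of $j$ cleanly handles the latter, while treating $p^+$ as a base $b$-matching that must be preserved under augmentation handles the former through standard residual-graph reasoning; the rest is routine bookkeeping converting the augmented matching back into a full assignment via the recipe already established in Theorem~\ref{prop2util}.
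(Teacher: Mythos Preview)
Your proposal is correct and follows essentially the same approach as the paper: both reduce the search for a Pareto improvement, via Theorem~\ref{prop2util}, to at most $n$ instances of a maximum-flow problem on the same source--agents--objects--sink network, one for each candidate agent $j$ with $p^-(j)\neq\emptyset$. Your observation that $p^+$ already provides a base flow so that a single augmenting-path search suffices is a minor refinement, but the construction and argument are otherwise identical.
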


Note that we can find a Pareto optimal assignment in polynomial time as well by repeatedly improve the assignment with Pareto improvements. In each Pareto improvement the number of top objects increases by at least one, and therefore there is at most $m$ Pareto improvements before reaching a Pareto optimal assignment.

\noindent
\begin{example}\label{example-two}
Let $n = 3$, $m = 6$, $p = (o_1 o_4 | o_2 o_5|o_3 o_6)$, $E_1^1 = \{o_1, o_2, o_3\}$, $E_2^1 = \{o_2\}$ and $E_3^1 = \{o_1, o_3, o_5, o_6\}$.  
$G_1$ is depicted in Figure~\ref{fig:flownetwork}.
The flow of value 5 (bold-lines) gives assignment $(o_1o_3|o_2o_4|o_5o_6)$, which Pareto dominates $p$.
\end{example}

\begin{figure}
	\begin{center}
	\includegraphics[clip,height=7cm,keepaspectratio]{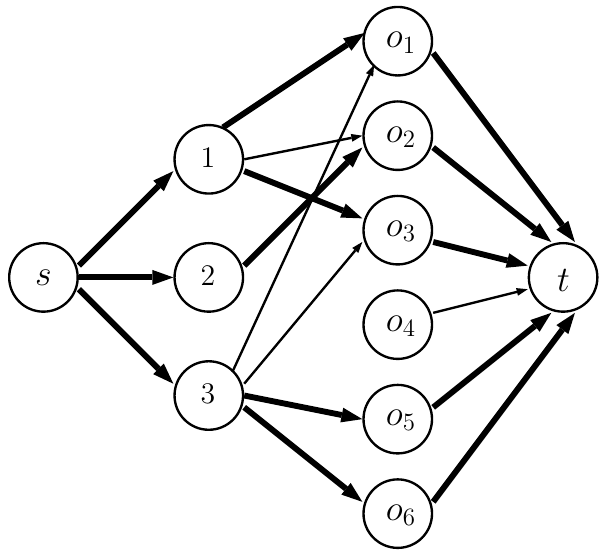}
	\end{center}
	\caption{Flow network $G_1$  in Example~\ref{example-two}.}
	\label{fig:flownetwork}
	\end{figure}

We end this section by a table (Table~\ref{table-summary}) summarizing our results for cardinal utilities.

\begin{table}[h!]
\begin{center}
	 \scalebox{0.8}{
\begin{tabular}{cccc}
\hline
\backslashbox{$\#$ agents}{utility values} & $\leq 2$ & lexicographic, $\geq 3$ & $\geq 3$ \\ \hline
$\geq$ 2, bounded & in {\sf P} & in {\sf P} & 
\begin{tabular}{l} weakly {\sf NP}-complete (Th.  \ref{th:lang})\\ pseudo-polynomial (Th.  \ref{theo:pseudopoly})\end{tabular} \\ \hline
unbounded & in {\sf P} (Th.  \ref{theo:bivalued}) & in {\sf P} (Th.  \ref{th:lexi-opt}) & \begin{tabular}{l} strongly {\sf NP}-complete  \\(Ths.  \ref{th:testpo-conpc} and \ref{th:testpo-three})\end{tabular}\\\hline
\end{tabular}
}
\end{center}
\label{table-summary}
\caption{Complexity of testing Pareto optimality}

\end{table}

\subsection{Conservative Pareto optimality}


In this section, we consider a related concept of conservative Pareto optimality and provide connections between testing Pareto optimality with testing conservative Pareto optimality. 
An assignment $p$ is \emph{conservatively Pareto optimal} if there does not exists another assignment $q$ that Pareto dominates $p$ and $|q(i)|=|p(i)|$ for all $i\in N$. 
Next we point out that a polynomial-time algorithm to test Pareto optimality can also be used to test conservative Pareto optimality in polynomial time. 

\begin{lemma}
	If there exists a polynomial-time algorithm to test Pareto optimality, then  there exists a polynomial-time algorithm to test conservative Pareto optimality.
	\end{lemma}
The argument is as follows. 
	For each $i\in N$ and $o\in O$, we update the utility $u_i(o)$ to utility $u_i(o)+C$ where $C$ is $m\times \max_{i\in N, o\in O}u_i(o)$. Note that by modifying the utility function, we ensure that an agent primarily cares about how many objects she gets and then cares about what are the particular utilities for the objects. Hence, for the modified utilities, in any Pareto improvement, no agent will get less objects than in her original allocation and hence each agent will get the same number of objects after reallocation. Hence if $q$ Pareto dominates $p$ with respect to the modified utilities if and only if then $q$ conservatively Pareto dominates $p$ with respect to the original utilities. Hence, by simply modifying the utilities as specified above, a polynomial-time algorithm to test Pareto optimality can be used to test conservative Pareto optimality.

For the converse, we note that as long as we allow zero utilities, if there exists a polynomial-time algorithm to test conservative Pareto optimality, then  there exists a polynomial-time algorithm to test Pareto optimality.

\begin{lemma}
	If there exists a polynomial-time algorithm to test conservative Pareto optimality, then  there exists a polynomial-time algorithm to test  Pareto optimality.
	\end{lemma}
	The argument is as follows. Consider an instance consisting of $n$ agents, $m$ objects, $u$ the utility matrix, and $p$ an assignment. We define the new set of objects by adding $(n-1)m$ dummy objects that each agent values at 0 (agent value the initial objects as in $u$). The assignment $q$ is defined from $p$ by giving $m - |p(i)|$ objects to agent $i$, for all $i$. Then $q$ is conservatively Pareto optimal for the modified instance if $p$ is Pareto optimal for the original instance.

\section{Ordinal preferences}\label{ordinal}

	
In this section, we consider the setting in which agents have additive cardinal utilities but only their ordinal preferences over objects are known by the central authority. This could be because the elicitation protocol did not ask the agents to communicate their utilities, or simply because they do not know them precisely.
We will assume in this section that for any underlying additive utility $u_i(o)>0$ for $i\in N$ and $o\in O$.
In this case, one can still reason whether a given assignment is Pareto optimal with respect to some or all cardinal utilities consistent with the ordinal preferences.
An assignment $p$ is \emph{possibly Pareto optimal} with respect to preference profile $\pref$ if there exists $u\in {\mathcal{U(\pref)}}$ such that $p$ is Pareto optimal for $u$. An assignment is \emph{necessarily Pareto optimal} with respect to preference profile $\pref$ if for any $u\in {\mathcal{U(\pref)}}$ the assignment $p$ is Pareto optimal for $u$.

\subsection{Possible Pareto Optimality}

We first note that necessary Pareto optimality implies possible Pareto optimality. Secondly, at least one necessarily Pareto optimal assignment exists in which all the objects are given to one agent.
Since the computation of a possibly or necessary Pareto optimal assignment can be performed in polynomial time, we focus on the problems of testing possible and necessary Pareto optimality.

				\begin{theorem}\label{th:posspo-equi} 
A (discrete) assignment is (1) possibly Pareto optimal if and only if (2) there exists no cycle in $G(p)$ which contains at least one edge corresponding to a strict preference if and only if (3) it is Pareto optimal under lexicographic utilities. 
					\end{theorem}
			\begin{proof}
				
We already established (2)$\iff$(3) in Theorem~\ref{th:lexi-opt}.
We also know that 
(3)$\implies $(1) because if an assignment that is Pareto optimal with respect to  lexicographic utilities, it is possibly Pareto optimal.
It remains to be shown that 
(1)$\implies $(2).
Suppose $p$ is not efficient with respect to lexicographic utilities, then by Theorem~\ref{th:lexi-opt}, $G(p)$ admits a cycle which contains at least one edge corresponding to a strict preference. Now  consider an assignment $q$ which is a result of exchange of objects along the cycle. If $o$ points to $o'$ in the cycle, then the agent getting $o$ in $p$ now gets $o'$ in place of $o$.  Since the cycle contains at least one one edge corresponding to a strict preference, assignment $q$ is a Pareto improvement over $p$ with respect to all utilities consistent with the ordinal preferences. Hence $p$ is not possibly Pareto optimal.
\end{proof}				

Since the characterization in Theorem~\ref{th:lexi-opt} also applies to possible Pareto optimality, hence possible Pareto optimality can be tested in linear time. The argument in the proof above also showed that
possible Pareto optimality is equivalent to Pareto optimality under lexicographic preferences.

The following example points out that a possibly Pareto optimal assignment may not be a necessarily Pareto optimal assignment.	
\begin{example}\label{example-npo}
Consider 
 two agents with identical preferences $o_1\succ o_2 \succ o_3 \succ o_4$. 
 Every assignment is possibly Pareto optimal; however assignment $p=(o_1o_4|o_2o_3)$ is not necessarily Pareto optimal since it is not Pareto optimal for the following utilities.
$$
\begin{array}{c|ccccc}
& o_1 & o_2 & o_3 & o_4 \\ \hline
1& 10 &9 &8 &7 \\
2 & 10 & 3 & 2 & 1\\
\end{array}
$$
\end{example}

\subsection{Necessary Pareto Optimality}

Next we present two characterizations of necessary Pareto optimality.
	The first highlights that necessary Pareto optimality is identical to the necessary Pareto optimality considered by \citet{BEL10a}.

We first define the 
\emph{responsive set extension}~\cite{barbera2004ranking}, which 
extends 
preferences over objects to 
preferences over sets of objects.
Formally,
for agent $i\in N$, her preferences
$\pref_i$ over $O$ are extended to her preferences $\pref_i^{RS}$ over $2^O$
as follows:
 $q(i) \mathrel{\pref_i^{RS}} p(i)$ if and only if there exists an injection $f$ from $p(i)$ to $q(i)$ such that for each $o\in p(i)$, $f(o)\pref_i o$.
 Since $\pref_i^{RS}$ is a partial order, we say a preference $R_i$ is a \emph{completion} of $\pref_i^{RS}$ if it is a complete and transitive relation over sets of objects that is consistent with $\pref_i^{RS}$.

 We say that an assignment is  \emph{RS-efficient} if it is Pareto optimal with respect to the RS set extension relation of the agents. An  assignment $p$ is RS-efficient, if there does not exist another assignment $q$ such that $q(i)\pref_i^{RS} p(i)$ for all $i\in N$ and $q(i)\succ_i^{RS} p(i)$ for some $i\in N$. We say that $q$ strictly RS-dominates $p$ if $q$ Pareto dominates $p$ with respect to RS.

\begin{theorem}\label{th:car-ext}
		An assignment is necessarily Pareto optimal if and only if it is Pareto optimal under all completions of the responsive set extension.
		\end{theorem}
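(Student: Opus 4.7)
The plan is to prove both directions by contraposition, leveraging the standard characterization $A \pref_i^{RS} B \Leftrightarrow u_i(A) \geq u_i(B)$ for every $u_i \in \mathcal{U}(\pref_i)$, which the paper already invokes in establishing the equivalence of RS-efficiency and SD-efficiency.

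The easy direction (PO under every completion $\Rightarrow$ necessarily PO) is straightforward. Suppose some $u \in \mathcal{U}(\pref)$ and assignment $q$ witness that $p$ fails necessary Pareto optimality. For each $i$, let $R_i$ be the weak order on $2^O$ induced by $u_i$. Because $u_i$ is consistent with $\pref_i$, the chain $A \pref_i^{RS} B \Rightarrow u_i(A) \geq u_i(B) \Rightarrow A\,R_i\,B$ shows that each $R_i$ extends $\pref_i^{RS}$, so $R=(R_1,\dots,R_n)$ is a profile of completions; under it, $q$ still Pareto dominates $p$, contradicting Pareto optimality under every completion.

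For the other direction, suppose $p$ is not PO under some completion $R$, witnessed by $q$ with $q(i)\,R_i\,p(i)$ for all $i$ and strictly for some $i^*$. Since $R_i$ extends $\pref_i^{RS}$, we cannot have $p(i) \succ_i^{RS} q(i)$ for any $i$, and $p(i^*) \sim_{i^*}^{RS} q(i^*)$ is additionally ruled out at $i^*$. Hence every $i$ falls into case (a) $q(i) \pref_i^{RS} p(i)$ or case (b) $q(i)$ and $p(i)$ are RS-incomparable, while $i^*$ falls into case (a') $q(i^*) \succ_{i^*}^{RS} p(i^*)$ or case (b). I will build $u\in\mathcal{U}(\pref)$ coordinate by coordinate, which is legal because each $u_i$ is chosen independently in $\mathcal{U}(\pref_i)$: in case (a) any $u_i$ delivers $u_i(q(i)) \geq u_i(p(i))$; in case (b), since $p(i) \not\pref_i^{RS} q(i)$, the characterization supplies some $u_i$ with $u_i(q(i)) > u_i(p(i))$; and in case (a') I claim every $u_{i^*}$ gives strict inequality, so any choice of $u_{i^*}$ is fine.

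The main obstacle is the case (a') claim. Let $f: p(i^*) \to q(i^*)$ be an injection with $f(o) \pref_{i^*} o$ witnessing $q(i^*) \pref_{i^*}^{RS} p(i^*)$. Then for any $u_{i^*}$,
\[
u_{i^*}(q(i^*)) \;=\; \sum_{o \in p(i^*)} u_{i^*}(f(o)) \;+\; \sum_{o' \in q(i^*)\setminus f(p(i^*))} u_{i^*}(o')
\;\geq\; u_{i^*}(p(i^*)).
\]
If $|q(i^*)| > |p(i^*)|$, positivity of utilities makes the second sum strictly positive. If $|q(i^*)| = |p(i^*)|$, the failure of $p(i^*) \pref_{i^*}^{RS} q(i^*)$ rules out an all-indifferent $f$ (otherwise $f^{-1}$ would witness the reverse RS-dominance), so some $f(o) \succ_{i^*} o$ and the first sum strictly exceeds $u_{i^*}(p(i^*))$. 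Once this single lemma is established, combining the coordinate-wise choices yields a profile $u\in\mathcal{U}(\pref)$ under which $q$ Pareto dominates $p$, contradicting necessary Pareto optimality and completing the proof.
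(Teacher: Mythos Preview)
Your proof is correct and follows essentially the same contrapositive-plus-case-analysis approach as the paper. The paper's own proof is terser: for the hard direction it only explicitly treats the RS-incomparable case, giving a concrete utility construction (make all objects weakly above some threshold $o$ nearly equal, and make $o$ dominate everything below). You instead invoke the standard equivalence $A\succsim_i^{RS}B \Leftrightarrow \forall u_i\in\mathcal{U}(\succsim_i),\,u_i(A)\ge u_i(B)$ as a black box for case~(b), and you supply a careful argument for case~(a') (strict RS-dominance forces strict inequality for every consistent $u$) that the paper leaves implicit. Both routes are valid; yours is more complete.

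One minor remark: in your easy direction you verify only that $A\succsim_i^{RS}B\Rightarrow A\,R_i\,B$. For $R_i$ to be a completion in the paper's sense (``consistent with'' $\succsim_i^{RS}$), you also need $A\succ_i^{RS}B\Rightarrow$ strict $R_i$-preference, i.e., $u_i(A)>u_i(B)$. This is exactly the lemma you prove later as case~(a'), so the content is already present; it would just be cleaner to state that lemma once up front and reuse it in both directions.
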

		\begin{proof}
		If an assignment is not Pareto optimal under certain additive preferences, it is by definition not Pareto optimal under this particular completion of responsive preferences.

			Assume that assignment $p$ is not Pareto optimal under some completion of the responsive set extension. Then there exists another assignment $q$ in which
			for all $i\in N$
			$q(i) \succsim_i^{RS} p(i)$ or ($(p(i)\nsucc_i^{RS} q(i)$ and $q(i)\nsucc_i^{RS} p(i))$),
			and for some $j\in N$, $q(j) \succ_j^{RS} p(j)$ or ($p(j)\nsucc_j^{RS} q(j)$ and $q(j)\nsucc_j^{RS} p(j)$)
		For both cases, if the allocations are incomparable with respect to responsive set extension, then there exists an object $o$ such that $|q(i)\cap \{o'\midd o\pref_i o\}|>|p(i)\cap \{o'\midd o\pref_i o\}|$. In that case, consider a utility function $u_i$ in which $u_i(o''')-u_i(o'')\leq \epsilon$ for all $o''',o''\pref_i o$ and $u_i(o)>\sum_{o'\prec_i o}u_i(o')+|O|\epsilon$. For $u_i$, $u_i(q(i))>u_i(p(i))$.
		\end{proof}
		
%

For our second characterization of necessarily Pareto optimal assignments, we define a
 \emph{one-for-two Pareto improvement swap} as an exchange between two agents $i_j$ and $i_k$ involving 
 objects $o_j^1$, $o_j^2\in p(i_j)$ and $o_k\in p(i_k)$ such that $o_k\succ_{i_j}o_j^1\pref_{i_j}o_j^2$.

\begin{theorem}\label{thm:char_nec_PO}
An assignment $p$ is necessarily Pareto optimal if and only if
\begin{enumerate}
\item[(i)] it is possibly Pareto optimal and
\item[(ii)] it does not admit a one-for-two Pareto improvement swap.
\end{enumerate}
\end{theorem}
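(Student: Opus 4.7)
My plan is to split the argument into the two implications, handling the forward direction first as it is essentially a construction.

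For the forward direction ($\Rightarrow$), I would start from the observation that lexicographic utilities form a subclass of additive utilities, so necessary Pareto optimality of $p$ forces $p$ to be Pareto optimal under every lexicographic profile; by Theorem~\ref{th:lexi-opt} and Theorem~\ref{th:posspo-equi} this is equivalent to possible Pareto optimality, yielding condition~(1). For condition~(2) I argue by contraposition: given a one-for-two swap with $o_j^1, o_j^2 \in p(i_j)$ and $o_k \in p(i_k)$ satisfying $o_k \succ_{i_j} o_j^1 \succsim_{i_j} o_j^2$, I construct $u \in \mathcal{U}(\succsim)$ witnessing that exchanging $\{o_j^1, o_j^2\}$ with $\{o_k\}$ between $i_j$ and $i_k$ (and leaving all others fixed) is a Pareto improvement. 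I scale $u_{i_j}(o_k)$ so that $u_{i_j}(o_k) > u_{i_j}(o_j^1) + u_{i_j}(o_j^2)$, which is compatible with $o_k$ being $i_j$'s top-ranked among the three. For $i_k$ I choose $u_{i_k}$ with $u_{i_k}(o_j^1) + u_{i_k}(o_j^2) \geq u_{i_k}(o_k)$: this is trivial when $o_j^1$ or $o_j^2$ is $\succsim_{i_k}$-weakly preferred to $o_k$, and is otherwise obtained by picking $u_{i_k}(o_k)$ only marginally above both $u_{i_k}(o_j^1)$ and $u_{i_k}(o_j^2)$.

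For the backward direction ($\Leftarrow$) I argue the contrapositive. Assume $u$ and $q$ are such that $q$ Pareto-dominates $p$ under $u$. The first dichotomy is whether $q(i) \succsim_i^{RS} p(i)$ holds at every agent. If so, the strict cardinal gain at some agent $i^{*}$ rules out $p(i^{*}) \succsim_{i^{*}}^{RS} q(i^{*})$, hence $q(i^{*}) \succ_{i^{*}}^{RS} p(i^{*})$; so $q$ strictly RS-dominates $p$, which by Theorem~\ref{th:posspo-equi} violates~(1).

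Otherwise I pick $i_j$ with $q(i_j) \not\succsim_{i_j}^{RS} p(i_j)$ and use the RS/SD equivalence to locate the minimal rank $k_0$ satisfying $|p(i_j) \cap \bigcup_{\ell \leq k_0} E_{i_j}^{\ell}| > |q(i_j) \cap \bigcup_{\ell \leq k_0} E_{i_j}^{\ell}|$. Minimality at $k_0$ delivers an $o_j^1 \in E_{i_j}^{k_0} \cap (p(i_j) \setminus q(i_j))$, and the utility inequality $u_{i_j}(q(i_j)) \geq u_{i_j}(p(i_j))$ combined with this top-rank deficit forces some compensating $o_k \in q(i_j) \setminus p(i_j)$ with $o_k \succ_{i_j} o_j^1$; its initial owner $i_k \neq i_j$ provides the third object of the candidate swap. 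To finish I locate $o_j^2 \in p(i_j) \setminus \{o_j^1\}$ with $o_j^1 \succsim_{i_j} o_j^2$, contradicting~(2).

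The main obstacle is the residual subcase in which $o_j^1$ is the unique strict $\succsim_{i_j}$-minimum of $p(i_j)$, so no $o_j^2$ is available at $i_j$. Here I would either shift perspective to $i_k$, exploiting $u_{i_k}(q(i_k)) \geq u_{i_k}(p(i_k))$ together with $o_k \in p(i_k) \setminus q(i_k)$ to extract a one-for-two swap on $i_k$'s side, or chase the chain of compensating objects across agents to assemble a strict-edge cycle in $G(p)$ that contradicts~(1) via Theorem~\ref{th:posspo-equi}. A cleaner alternative, which I expect to be more robust, is to pre-select $q$ minimizing $\sum_i |p(i) \triangle q(i)|$ among Pareto improvements; minimality should concentrate the $p$--$q$ discrepancy at two agents in exactly the one-for-two pattern, since any larger-support discrepancy would admit a proper cyclic sub-exchange yielding either a smaller Pareto improvement or a strict-edge cycle in $G(p)$.
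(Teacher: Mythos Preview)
Your forward direction is sound and essentially matches the paper: condition~(1) via lexicographic utilities is exactly how possible Pareto optimality arises, and your utility construction for condition~(2) is the same idea as the paper's (the paper simply notes that $u_{i_k}(o_k)<u_{i_k}(o_j^1)+u_{i_k}(o_j^2)$ is achievable irrespective of $i_k$'s ordinal ranking, which is what your case split establishes).

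The backward direction, however, has a genuine gap. Your central claim---that the inequality $u_{i_j}(q(i_j))\geq u_{i_j}(p(i_j))$ together with the top-rank deficit at level $k_0$ ``forces some compensating $o_k\in q(i_j)\setminus p(i_j)$ with $o_k\succ_{i_j}o_j^1$''---does not follow. Compensation can come entirely from \emph{below} $o_j^1$. Concretely, take $a\succ_{i_j} b\succ_{i_j} c\succ_{i_j} d$, $p(i_j)=\{a,b\}$, $q(i_j)=\{a,c,d\}$. Then $k_0=2$, $o_j^1=b$, and $q(i_j)\setminus p(i_j)=\{c,d\}$ contains nothing above $b$; yet $u_{i_j}(q(i_j))\geq u_{i_j}(p(i_j))$ holds whenever $u_{i_j}(c)+u_{i_j}(d)\geq u_{i_j}(b)$. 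So the object $o_k$ you need may simply not exist, and the rest of the argument never gets off the ground. Your ``residual subcase'' analysis misdiagnoses the obstacle (it is not about finding $o_j^2$, it is already about $o_k$), and the two fallback proposals---chain-chasing and minimising the symmetric difference---are stated as intuitions, not arguments; in particular there is no reason a minimal-support Pareto improvement should concentrate on two agents in a one-for-two pattern.

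What you are missing is a structural consequence of condition~(2) that the paper exploits from the outset: if $p$ admits no one-for-two swap, then for every agent $i$ the bundle $p(i)$ consists of a set $T_p(i)$ of $i$'s top objects (no object outside $p(i)$ is strictly preferred to any of them) together with at most one additional object $w_p(i)$ ranked below. This ``tops plus at most one'' shape is the engine of the whole argument: it immediately gives $|q(i)|=|p(i)|$ for all $i$ (fewer objects cannot match the utility of tops), forces every object $i$ receives in $q$ to sit weakly above $w_p(i)$, and lets one classify agents into three types and set up an explicit one-to-one mapping $\pi$ between $q(i)\setminus p(i)$ and $p(i)\setminus q(i)$ on the first two types. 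Walking this mapping, starting from any agent who strictly improves, produces a cycle in $G(p)$ with a strict edge, contradicting condition~(1). Without first extracting this structure from~(2), the backward direction does not go through.
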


\begin{proof}
	We first show that if an assignment does not satisfy the two conditions, then it is not necessarily Pareto optimal.
Possible Pareto optimality is a requirement for the assignment to be necessarily Pareto optimal. To see that the second condition is also necessary, we have to show that if $p$ admits a one-for-two Pareto improvement swap then $p$ is not necessarily Pareto optimal. This is because the swap could indeed be a Pareto improvement for these two agents with the following utilities: $u_{i_j}(o_k)>2u_{i_j}(o_j^1)(\geq u_{i_j}(o_j^1)+u_{i_j}(o_j^2))$ and $u_{i_k}(o_k)<u_{i_k}(o_j^1)+u_{i_k}(o_j^2)$. These utilities are compatible with the ordinal preferences of these agents, because of the assumption $o_k\succ_{i_j}o_j^1\pref_{i_j}o_j^2$ (and irrespective to the ordinal preferences of $i_k$).

\smallskip

										Conversely, to show that conditions (i) and (ii) are sufficient for the assignment to be necessarily Pareto optimal, suppose for a contradiction that (1) $p$ is not necessarily Pareto optimal and (2) $p$ does not admit a one-for-two Pareto improvement swap. We will then show that there is an assignment that strictly RS-dominates $p$, implying that $p$ cannot be possibly Pareto optimal.
										
										From (1) and Theorem \ref{th:car-ext}, we have (3) there is another assignment $q$ and a collection of additive utility functions $u = (u_1, \ldots, u_n) \in \mathcal{U}(\pref)$ such that $q$ Pareto dominates $p$ with respect to $u$.

										Without loss of generality 
										  we may assume that each agent receives a nonempty bundle in $p$.
Regarding \textbf{the structure} of $p$, first we observe that the lack of one-for-two Pareto improvement swaps implies that every agent is assigned to some (or none) of her top objects and possibly to one additional object that she ranks lower. Formally, let $T_p(i)$ denote a set of $i$'s top objects she is assigned to in $p$, i.e., $T_p(i)=\{o: o\in p(i) \mbox{ s.t. } \nexists o'\notin p(i), o'\succ_i o\}$. Then $p(i)=T_p(i)\cup \{w_p(i)\}$, where $w_p(i)$ is either a single object or no object.

										We show that $|q(i)|=|p(i)|$ must hold for every agent $i$. Suppose not, then there is an agent $i$ for which $|q(i)|<|p(i)|$. By the definition of $T_p(i)$ it is straightforward that if $w_p(i)=\emptyset$ then $u_i(p(i))=u_i(T_p(i))> u_i(q(i))$, and if $w_p(i)\neq \emptyset$ then $u_i(p(i))=u_i(T_p(i)\cup \{w_p(i)\})> u_i(T_p(i))\geq u_i(q(i))$, a contradiction.
										Furthermore, for every agent $i$, if $\{w_p(i)\}\neq\emptyset$ then for any object $o\in q(i)$ we have $o \succsim_i w_p(i)$.
										Otherwise, if there was an agent $i$ with $o\in q(i)$ such that $w_p(i)\succ_i o$, then $u_i(T_p(i))\geq u_i(q(i)\setminus\{o\})$ would imply $u_i(p(i))=u_i(T_p(i)\cup \{w_p(i)\}) > u_i(q(i))$.

										Now we construct a so-called \textbf{Pareto improvement sequence} with respect to $p$ and $q$, which consists of a sequence of agents $\{i_1, i_2, \dots i_k\}$ with possible repetitions and a set of distinct objects $\{o_1, o_2, \dots , o_m\}$ such that
										\begin{itemize}
										\item $o_1 \in q(i_2) \setminus p(i_2)$, $o_2 \in p(i_2) \setminus q(i_2)$, and $o_1 \succsim_{i_2} o_2$;
										\item $o_2 \in q(i_3) \setminus p(i_3)$, $o_3 \in p(i_3) \setminus q(i_3)$, and $o_2 \succsim_{i_3} o_3$;
										\item \ldots
										\item $o_m \in q(i_1)\setminus p(i_1)$, $o_1\in p(i_1)\setminus q(i_1)$, and $o_{m}\succsim_{i_1}o_1$.
										\end{itemize}
										and with strict preference for at least one agent.

										The presence of the above Pareto improvement sequence would imply the existence of an assignment $q'$ that RS-dominates $p$, 
										obtained by letting the agents exchange their objects along the sequence, i.e., with $q'(i)=p(i)\cup \{o_{k-1}:i_k=i, k=1,\ldots ,m\}\setminus \{o_k:i_k=i, k=1,\ldots,m\}$. This would contradict our assumption that $p$ is possibly Pareto optimal.

										We first define three types of agents, and a \textbf{one-to-one mapping $\pi$} from a subset of $O$ to itself such that if $o\in p(i)\setminus q(i)$ and $\pi(o)\in q(i)\setminus p(i)$ then $i$ is indifferent between these two objects. In the set $X$ we put all the agents with either no $w_p(i)$ or with $w_p(i)\in q(i)$. Each agent $i$ in this set must be indifferent between all objects in $(p(i)\setminus q(i))\cup (q(i)\setminus p(i))$ (i.e., these object are in a single tie in $i$'s preference list) by the following reasons. $|p(i)|=|q(i)|$ implies $|p(i)\setminus q(i)|=|q(i)\setminus p(i)|$. By the definition of $T_p(i)$ it follows that any object in $p(i)\setminus q(i)$ is weakly preferred to any object in $q(i)\setminus p(i)$ by $i$. However, from (3) we have $u_i(q(i))\geq u_i(p(i))$, which implies that $u_i(q(i)\setminus p(i))\geq u_i(p(i)\setminus q(i))$, which can only happen if $i$ is indifferent between any two objects in $(p(i)\setminus q(i))\cup (q(i)\setminus p(i))$. Let $\pi$ map $q(i)\setminus p(i)$ to $p(i)\setminus q(i)$ as a bijective function.

										Next, let $Y$ contain every agent $i$ who has object $w_p(i)$ such that there is an object $o\in q(i)\setminus p(i)$ with $o\sim_i w_p(i)$. 
										In this case $i$ must be indifferent between all objects in $(T_p(i)\setminus (q(i)\setminus \{o\}))\cup ((q(i)\setminus \{o\})\setminus T_p(i))$.

										Indeed, $|p(i)|=|q(i)|$ implies $|T_p(i)\setminus (q(i)\setminus \{o\})|=|(q(i)\setminus \{o\})\setminus T_p(i)|$.

										By the definition of $T_p(i)$ any object in $T_p(i)\setminus (q(i)\setminus \{o\})$ is weakly preferred to any object in $(q(i)\setminus \{o\})\setminus T_p(i)$ by $i$. On the other hand, $u_i(q(i))\geq u_i(p(i))$ and $o\sim_i w_p(i)$ implies $u_i((q(i)\setminus \{o\})\setminus T_p(i))\geq u_i(T_p(i)\setminus (q(i)\setminus \{o\}))$, leading to the conclusion that $i$ must be indifferent between all objects in $(T_p(i)\setminus (q(i)\setminus \{o\}))\cup ((q(i)\setminus \{o\})\setminus T_p(i))$. Therefore $\pi$ can map $o$ to $w_p(i)$ and $(q(i)\setminus \{o\})\setminus T_p(i)$ to $T_p(i)\setminus q(i)\setminus \{o\})$.

										Thirdly, let $Z$ contain every agent $i$ with object $w_p(i)$ such that for every $o\in q(i)$, $o\succ_i w_p(i)$. Note that there is at least one agent in $Z$, the one who gets strictly better off in $q$, as otherwise, if there was an object $o\in q(i)$ such that $w_p(i)\pref_i o$, then $u_i(T_p(i))\geq u_i(q(i)\setminus \{o\})$ would imply $u_i(p(i))=u_i(T_p(i)\cup \{w_p(i)\})\geq u_i(q(i))$.

										Finally, we shall note that if $T_p(i)$ is empty then $|p(i)|=|q(i)|=1$, so either $i$ is indifferent between $p(i)=\{w_p(i)\}$ and $q(i)$, in which case $i$ is in $Y$ with $\pi(q(i))=p(i)$, or $i$ strictly prefers $q(i)$ to $p(i)$ and then $i$ belongs to $Z$.

										To summarize, so far we have that for any $i\in X\cup Y$ and $o\in q(i)\setminus p(i)$ we associate an object $\pi(o)\in p(i)\setminus q(i)$ such that $o\sim_i \pi(o)$. Furthermore, for any $i\in Z$ and $o\in q(i)\setminus p(i)$ we have that $o\succ_i w_p(i)$.

										We build a Pareto improvement sequence as a part of a sequence involving agents $i_1, i_2, \dots$ with corresponding objects $o_1, o_2, \dots$ starting from any $i_1\in Z$ with $o_1=w_p(i)$. For every $k\geq 2$, let $i_k$ be the agent who receives 
$o_{k-1}$ in $q$. If $i_k\in X\cup Y$ then let $o_k=\pi(o_{k-1})$, and if $i_k\in Z$ then let $o_k=w_p(i)$. We terminate the sequence when an object is first repeated. 
This repetition must occur at some agent in $Z$, since for any agent $i$ the objects in $q(i)\setminus p(i)$ are in a one-to-one correspondence with those in $p(i)\setminus q(i)$ by 
$\pi$.

										Let the first repeated object belong to, say, $i_s=i_t\in Z$ for indices $1\leq s<t$. We show that the sequence $i_{s}, \dots, i_{t-1}$ is a Pareto improvement sequence. To see this, let us first consider an agent $i\in X\cup Y$. Whenever $i$ appears in the sequence as $i_k\in \{i_{s+1}, \dots, i_t\}$ she receives object $o_{k-1}\in q(i)\setminus p(i)$ and in return she gives away $\pi(o_{k-1})=o_k\in p(i)\setminus q(i)$, where $i$ is indifferent between $o_{k-1}$ and $o_k$. Now, let $i\in Z\setminus \{i_t\}$ that appears as $i_l\in \{i_{s+1}, \dots, i_t\}$. She receives object $o_{l-1}\in q(i)\setminus p(i)$ and in return she gives away  $w_p(i)=o_k\in p(i)\setminus q(i)$, where $o_{l-1}\succ_i w_p(i)$ by the definition of $Z$. Since $i$ appears in this sequence only once, it is obvious that $u_i(q(i)) > u_i(p(i))$. Finally, regarding $i=i_s=i_t\in Z$, $i$ receives $o_{t-1}\in q(i)\setminus p(i)$ and she gives away  $w_p(i)=o_{s}\in p(i)\setminus q(i)$, where $o_{t-1}\succ_i w_p(i)$. So we constructed a Pareto improvement sequence, and therefore $p$ is not possibly Pareto optimal, a contradiction.
\end{proof}

In Example~\ref{example-npo}, 
$p$ is not necessarily Pareto optimal because it admits a one-for-two Pareto improvement swap:
$o_2$, $o_3\in p(2)$, $o_1\in p(1)$ and $o_1\succ_{2}o_2\pref_{2}o_3$.
Example~\ref{example-npo} also shows that although an assignment may not be necessarily Pareto optimal there may not be any assignment that Pareto dominates it for \emph{all} utilities consistent with the ordinal preferences.
The characterization above also gives us a polynomial-time algorithm to test necessary Pareto optimality.

Finally, we would like to highlight that in some natural situation, where the preferences of the agents are aligned, all necessary Pareto optimal assignment can be seen as very unfair, as we illustrate in the following example.

\begin{example}\label{example-necessary}
Suppose that we have $m$ objects and $n$ agents, where $n<<m$ and all agents have the same preferences, e.g.\ they order objects according to their indices. In this situation in any necessarily Pareto efficient assignment essentially one agent must get all the $m-n$ top objects and at most one less preferred one, whilst every other agent may receive at most one object. So, if agent $1$ is the lucky one and $p$ is a necessarily Pareto optimal assignment then $p(1)=\{o_1,o_2,\dots , o_{m-n}\}$ must hold, with the possibility of agent 1 getting a further extended set of the top objects, whilst each of the other agents may hold at most one object from the rest of the objects (including agent 1).
\end{example}

\section{Conclusions}

We have studied, from a computational point of view, Pareto optimality in resource allocation under additive utilities and ordinal preferences. 
Many of our positive algorithmic results come with characterizations of Pareto optimality that improve our understanding of the concept and may be of independent interest.

Future work includes identifying restrictions on the preferences under which Pareto optimal and individually rational reallocation can be done in a computationally efficient manner.

\section*{Acknowledgements}  
This paper is an extended version of a paper that was presented at 
AAMAS (International Conference on Autonomous Agents \& Multiagent Systems) 2016~\citep{ABL+16a} and COMSOC (International Workshop on Computational Social Choice) 2016. 
Haris Aziz is supported by a Julius Career Award. P\'eter Bir\'o acknowledges support from the Hungarian Academy of Sciences under its Momentum Programme (LP2016-3/2016), the Hungarian Scientific Research Fund, OTKA, Grant No. K108673, and the J\'anos Bolyai Research Scholarship of the Hungarian Academy of Sciences. J\'er\^ome Lang, Julien Lesca and J\'er\^ome Monnot acknowledge support from the ANR project CoCoRICo-CoDec.  Part of the work was conducted when P\'eter Bir\'o visited Paris Dauphine and when Julien Lesca visited Corvinus University of Budapest sponsored by COST Action IC1205 on Computational Social Choice.

\bibliographystyle{plainnat}

\begin{thebibliography}{26}
     \expandafter\ifx\csname natexlab\endcsname\relax\def\natexlab#1{#1}\fi
     \expandafter\ifx\csname url\endcsname\relax
       \def\url#1{\texttt{#1}}\fi
     \expandafter\ifx\csname urlprefix\endcsname\relax\def\urlprefix{URL }\fi

     \bibitem[{Abraham et~al.(2005)Abraham, Cechl{\'a}rov{\'a}, Manlove, and
       Mehlhorn}]{ACMM05a}
     Abraham, D.~J., Cechl{\'a}rov{\'a}, K., Manlove, D., Mehlhorn, K., 2005. Pareto
       optimality in house allocation problems. In: Proceedings of the 16th
       International Symposium on Algorithms and Computation (ISAAC). Vol. 3341. pp.
       1163--1175.

     \bibitem[{Aziz et~al.(2016{\natexlab{a}})Aziz, Biro, Lang, Lesca, and
       Monnot.}]{ABL+16a}
     Aziz, H., Biro, P., Lang, J., Lesca, J., Monnot., J., 2016{\natexlab{a}}.
       Optimal reallocation under additive and ordinal preferences. In: Proceedings
       of the 15th International Conference on Autonomous Agents and Multi-Agent
       Systems (AAMAS). pp. 402--410.

     \bibitem[{Aziz et~al.(2015{\natexlab{a}})Aziz, Brandl, and Brandt}]{ABB14b}
     Aziz, H., Brandl, F., Brandt, F., 2015{\natexlab{a}}. Universal {P}areto
       dominance and welfare for plausible utility functions. Journal of
       Mathematical Economics 60, 123--133.

     \bibitem[{Aziz et~al.(2013)Aziz, Brandt, and Harrenstein}]{ABH11c}
     Aziz, H., Brandt, F., Harrenstein, P., 2013. Pareto optimality in coalition
       formation. Games and Economic Behavior 82, 562--581.

     \bibitem[{Aziz et~al.(2017)Aziz, de~Haan, and Rastegari}]{AHR17a}
     Aziz, H., de~Haan, R., Rastegari, B., 2017. Pareto optimal allocation under
       uncertain preferences. In: Proceedings of the 26th International Joint
       Conference on Artificial Intelligence (IJCAI). pp. 1472--1474.

     \bibitem[{Aziz and de~Keijzer(2012)}]{AzKe12a}
     Aziz, H., de~Keijzer, B., 2012. Housing markets with indifferences: a tale of
       two mechanisms. In: Proceedings of the 26th AAAI Conference on Artificial
       Intelligence (AAAI). pp. 1249--1255.

     \bibitem[{Aziz et~al.(2016{\natexlab{b}})Aziz, Lang, and Monnot}]{ALL16a}
     Aziz, H., Lang, J., Monnot, J., 2016{\natexlab{b}}. {Computing Pareto Optimal
       Committees}. In: Proceedings of the 25th International Joint Conference on
       Artificial Intelligence (IJCAI). pp. 60--66.

     \bibitem[{Aziz et~al.(2015{\natexlab{b}})Aziz, Mackenzie, Xia, and
       Ye}]{AMXY15a}
     Aziz, H., Mackenzie, S., Xia, L., Ye, C., 2015{\natexlab{b}}. Ex post
       efficiency of random assignments. In: Proceedings of the 14th International
       Conference on Autonomous Agents and Multi-Agent Systems (AAMAS). pp.
       1639--1640.

     \bibitem[{Barber{\`a} et~al.(2004)Barber{\`a}, Bossert, and
       Pattanaik}]{barbera2004ranking}
     Barber{\`a}, S., Bossert, W., Pattanaik, P.~K., 2004. Ranking sets of objects.
       Springer.

     \bibitem[{Bogomolnaia and Moulin(2001)}]{BoMo01a}
     Bogomolnaia, A., Moulin, H., 2001. A new solution to the random assignment
       problem. Journal of Economic Theory 100~(2), 295--328.

     \bibitem[{Bouveret et~al.(2010)Bouveret, Endriss, and Lang}]{BEL10a}
     Bouveret, S., Endriss, U., Lang, J., 2010. Fair division under ordinal
       preferences: Computing envy-free allocations of indivisible goods. In:
       Proceedings of the 19th European Conference on Artificial Intelligence
       (ECAI). pp. 387--392.

     \bibitem[{Bouveret and Lang(2014)}]{BoLa14b}
     Bouveret, S., Lang, J., 2014. Manipulating picking sequences. In: Proceedings
       of the 21st European Conference on Artificial Intelligence (ECAI). pp.
       141--146.

     \bibitem[{Brams et~al.(2003)Brams, Edelman, and Fishburn}]{BramsEF03}
     Brams, S.~J., Edelman, P.~H., Fishburn, P.~C., 09 2003. {Fair Division Of
       Indivisible Items}. Theory and Decision 55~(2), 147--180.

     \bibitem[{Cechl{\'a}rov{\'a} et~al.(2015)Cechl{\'a}rov{\'a}, Eirinakis,
       Fleiner, Magos, Manlove, Mourtos, Oce{\v l}{\'a}kov{\'a}, and
       Rastegari}]{CEF+15a}
     Cechl{\'a}rov{\'a}, K., Eirinakis, P., Fleiner, T., Magos, D., Manlove, D.~F.,
       Mourtos, I., Oce{\v l}{\'a}kov{\'a}, E., Rastegari, B., 2015. Pareto optimal
       matchings in many-to-many markets with ties. In: Proceedings of the 8th
       International Symposium on Algorithmic Game Theory (SAGT). pp. 27--39.

     \bibitem[{Damamme et~al.(2015)Damamme, Beynier, Chevaleyre, and
       Maudet}]{DBCM15a}
     Damamme, A., Beynier, A., Chevaleyre, Y., Maudet, N., 2015. The power of swap
       deals in distributed resource allocation. In: Proceedings of the 14th
       International Conference on Autonomous Agents and Multi-Agent Systems
       (AAMAS). pp. 625--633.

     \bibitem[{{de Keijzer} et~al.(2009){de Keijzer}, Bouveret, Klos, and
       Zhang}]{Keijzer09}
     {de Keijzer}, B., Bouveret, S., Klos, T., Zhang, Y., 2009. On the complexity of
       efficiency and envy-freeness in fair division of indivisible goods with
       additive preferences. In: Proceedings of the 1st International Conference on
       Algorithmic Decision Theory. pp. 98--110.

     \bibitem[{Demko and Hill(1988)}]{DeHi88a}
     Demko, S., Hill, T.~P., 1988. Equitable distribution of indivisible objects.
       Mathematical Social Sciences 16, 145--158.

     \bibitem[{Endriss(2013)}]{Endr13b}
     Endriss, U., 2013. Reduction of economic inequality in combinatorial domains.
       In: Proceedings of the 12th International Conference on Autonomous Agents and
       Multi-Agent Systems (AAMAS). pp. 175--182.

     \bibitem[{Erdil and Ergin(2015)}]{ErEr15a}
     Erdil, A., Ergin, H., June 2015. Two-sided matching with indifferences.

     \bibitem[{Fujita et~al.(2015)Fujita, Lesca, Sonoda, Todo, and Yokoo}]{FLS+15a}
     Fujita, E., Lesca, J., Sonoda, A., Todo, T., Yokoo, M., 2015. A complexity
       approach for core-selecting exchange with multiple indivisible goods under
       lexicographic preferences. In: Proceedings of the 29th AAAI Conference on
       Artificial Intelligence (AAAI). pp. 907--913.

     \bibitem[{Konishi et~al.(2001)Konishi, Quint, and Wako}]{KQW01a}
     Konishi, H., Quint, T., Wako, J., 2001. On the {S}hapley-{S}carf economy: the
       case of multiple types of indivisible goods. Journal of Mathematical
       Economics 35~(1), 1--15.

     \bibitem[{Kratochv{\'\i}l et~al.(1993)Kratochv{\'\i}l, Savick{\'y}, and
       Tuza}]{Kratochvl1993OneMO}
     Kratochv{\'\i}l, J., Savick{\'y}, P., Tuza, Z., 1993. One more occurrence of
       variables makes satisfiability jump from trivial to np-complete. SIAM Journal
       on Computing 22, 203--210.

     \bibitem[{Manlove(2013)}]{Manl13a}
     Manlove, D., 2013. Algorithmics of Matching Under Preferences. World Scientific
       Publishing Company.

     \bibitem[{Sonoda et~al.(2014)Sonoda, Todo, Sun, and Yokoo}]{STSY14a}
     Sonoda, A., Todo, T., Sun, H., Yokoo, M., 2014. Two case studies for trading
       multiple indivisible goods with indifferences. In: Proceedings of the 28th
       AAAI Conference on Artificial Intelligence (AAAI). pp. 791--797.

     \bibitem[{Todo et~al.(2014)Todo, Sun, and Yokoo}]{TSY14a}
     Todo, T., Sun, H., Yokoo, M., 2014. Strategyproof exchange with multiple
       private endowments. In: Proceedings of the 28th AAAI Conference on Artificial
       Intelligence (AAAI). pp. 805--811.

     \bibitem[{Yu et~al.(2004)Yu, Hoogeveen, and Lenstra}]{YuHL04}
     Yu, W., Hoogeveen, H., Lenstra, J.~K., 2004. Minimizing makespan in a
       two-machine flow shop with delays and unit-time operations is {NP}-hard.
       Journal of Scheduling 7, 333--348.

     \end{thebibliography}

\end{document}